\newtheorem{conjecture}{Conjecture}
\newcommand{\ALLCANDS}{\ensuremath{X}\xspace}
\newcommand{\ALLCANDSP}{\ensuremath{X'}\xspace}
\newcommand{\ALLVOTERS}{\ensuremath{V}\xspace}
\newcommand{\ALLPREFS}{\ensuremath{\mathcal{P}}\xspace}
\newcommand{\PREFDIST}{\ensuremath{\mathcal{D}}\xspace}
\newcommand{\PREF}[1][]{\ensuremath{%
    \ifthenelse{\equal{#1}{}}{\succ}{\succ_{#1}}}\xspace}
\newcommand{\Pref}[3][]{\ensuremath{#2 \PREF[#1] #3}\xspace}
\newcommand{\PREFP}[1][]{\ensuremath{%
    \ifthenelse{\equal{#1}{}}{\succ'}{\succ'_{#1}}}\xspace}
\newcommand{\PrefP}[3][]{\ensuremath{#2 \PREFP[#1] #3}\xspace}
\newcommand{\WPREF}[1][]{\ensuremath{%
    \ifthenelse{\equal{#1}{}}{\succeq}{\succeq_{#1}}}\xspace}
\newcommand{\WPref}[3][]{\ensuremath{#2 \WPREF[#1] #3}\xspace}
\newcommand{\CandFirst}[2]{\ensuremath{[#1 \succ #2]}\xspace}
\newcommand{\CandLast}[2]{\ensuremath{[#2 \succ #1]}\xspace}
\newcommand{\SCRULE}{\ensuremath{f}\xspace}
\newcommand{\SCRule}[1]{\ensuremath{\SCRULE(#1)}\xspace}
\newcommand{\Cost}[1]{\ensuremath{C(#1)}\xspace}
\newcommand{\Distortion}[1]{\ensuremath{\rho(#1)}\xspace}
\newcommand{\DIST}{\ensuremath{d}\xspace}
\newcommand{\Dist}[2]{\ensuremath{\DIST(#1,#2)}\xspace}
\newcommand{\Legal}[2]{\ensuremath{#1 \sim #2}\xspace}
\newcommand{\OPT}[1][]{\ensuremath{%
    \ifthenelse{\equal{#1}{}}{\LPC}{\LPC_{#1}}}\xspace}
\newcommand{\WINNER}{\ensuremath{w}\xspace}
\newcommand{\Lpm}[2]{\ensuremath{d_{#2,#1}}\xspace}
\newcommand{\LPC}{\ensuremath{x^*}\xspace}
\newcommand{\DNo}[1]{\ensuremath{\alpha_{#1}}\xspace}
\newcommand{\DCo}[3]{\ensuremath{\phi^{(#1)}_{#2,#3}}\xspace}
\newcommand{\DTe}[4]{\ensuremath{\psi^{(#1,#2)}_{#3,#4}}\xspace}
\newcommand{\OptDist}[2]{\ensuremath{c_{#1,#2}}\xspace}
\newcommand{\WOptDist}[1]{\ensuremath{\bar{c}_{#1}}\xspace}
\newcommand{\MATCH}{\ensuremath{\mu}\xspace}
\newcommand{\Match}[1]{\ensuremath{\MATCH_{#1}}\xspace}
\newcommand{\Inter}[1]{\ensuremath{\tilde{x}_{#1}}\xspace}
\newcommand{\PREFFRAC}{\ensuremath{p}\xspace}
\newcommand{\PrefFrac}[1]{\ensuremath{\PREFFRAC_{#1}}\xspace}
\newcommand{\FlCost}[2]{\ensuremath{\gamma^{(#1)}_{#2}}\xspace}
\newcommand{\Block}[2]{\ensuremath{\mathcal{B}_{#1,#2}}\xspace}
\newcommand{\Bip}[2]{\ensuremath{H_{#1,#2}}\xspace}
\newcommand{\BipP}[2]{\ensuremath{H'_{#1,#2}}\xspace}
\newcommand{\Neigh}[1]{\ensuremath{\Gamma(#1)}\xspace}
\newcommand{\COMPG}{\ensuremath{\mathcal{G}}\xspace}
\newcommand{\CompG}[2]{\ensuremath{\mathcal{G}(#1,#2)}\xspace}
\newcommand{\COMPGP}{\ensuremath{\mathcal{G}'}\xspace}
\newcommand{\EWit}[2]{\ensuremath{Z_{#1,#2}}\xspace}
\newcommand{\EWitS}[1]{\ensuremath{Z_{#1}}\xspace}
\newcommand{\CandNode}[1]{\ensuremath{y_{#1}}\xspace}
\newcommand{\SetNode}[1]{\ensuremath{a_{#1}}\xspace}
\newcommand{\ComplNode}[1]{\ensuremath{b_{#1}}\xspace}
\newcommand{\CandNodes}[1]{\ensuremath{Y_{#1}}\xspace}
\newcommand{\SetNodes}[1]{\ensuremath{A_{#1}}\xspace}
\newcommand{\ComplNodes}[1]{\ensuremath{B_{#1}}\xspace}
\newcommand{\IndPos}[2]{\ensuremath{\alpha(#1,#2)}\xspace}
\newcommand{\IndNeg}[2]{\ensuremath{\beta(#1,#2)}\xspace}
\newcommand{\ABM}{\textsc{All Bipartite Matchings}\xspace}
\newcommand{\CandCompGr}{Candidate Comparison Graph\xspace}
\begin{document}

\title{An Analysis Framework for Metric Voting based on LP Duality}
\author{David Kempe\\
University of Southern California}


\maketitle

\begin{abstract}
  Distortion-based analysis has established itself as a fruitful framework for comparing voting mechanisms. The assumption is that the $m$ voters and $n$ candidates are jointly embedded in an (unknown) metric space, and the voters submit rankings of candidates by non-decreasing distance from themselves. Based on the submitted rankings, the social choice rule chooses a winning candidate; the quality of the winner is the sum of the (unknown) distances to the voters.
Since it is missing the information about the actual distances, the rule's choice will in general be suboptimal, and the worst-case ratio between the cost of its chosen candidate and the optimal candidate is called the rule's \emph{distortion}. It was shown in prior work that every deterministic rule has distortion at least 3, while the Copeland rule and related rules guarantee distortion at most 5, and a very recent result gave a generalization of Copeland with distortion $2+\sqrt{5} \approx 4.236$.

We provide a framework based on LP-duality and flow interpretations of the dual which provides a simpler and more unified way for proving upper bounds on the distortion of social choice rules. Rather than having to reason about all possible metric spaces, to establish an upper bound, it is sufficient to exhibit a certain type of flow with small cost.
We illustrate the utility of this approach with three examples.
First, we give a fairly simple proof of a strong generalization of the upper bound of 5 on the distortion of Copeland, to social choice rules with short paths from the winning candidate to the optimal candidate in generalized weak preference graphs. A special case of this result recovers the recent $2+\sqrt{5}$ guarantee.
Next, we use this generalization to show that the Ranked Pairs and Schulze rules have distortion $\Theta(\sqrt{n})$.
Finally, our framework naturally suggests a combinatorial rule that is a strong candidate for achieving distortion 3, which had also been proposed in recent work. We prove that the distortion bound of 3 would follow from any of three combinatorial conjectures we formulate (and have verified by computer for $n \leq 7$ candidates).

\end{abstract}

\section{Introduction} \label{sec:introduction}
Voting is an important and widespread way for a group to choose one
out of multiple available candidate options.\footnote{%
  In this submission, we do not consider the equally important and
  widely studied problem of a group ranking all of the available options.}
The group could be a country, academic department, or other
organization,
and the $n$ candidate options they choose from could be courses of
action or human candidates. 
Typically, each voter submits a total order of all options,
called a \emph{ranking} or \emph{preference order}.
Based on all the submitted rankings, a \emph{social choice rule}
(or \emph{mechanism}) determines the winning option.

Different mechanisms will have different desirable and undesirable
properties, and it is important to articulate and analyze these
properties to guide an organization's choice of mechanisms.
The axiomatic approach, dating back at least several centuries
\cite{borda:elections,condorcet:essay}, articulates natural axioms
about the properties that the mapping from rankings to a winner should
satisfy, and has led to extensive work
(see, e.g., \cite{BCULP:social-choice} for an overview). 
Unfortunately, many of the key results are impossibility results,
in particular the famous Gibbard-Satterthwaite Theorem
\cite{gibbard:manipulation,satterthwaite:voting} showing that there is
no truthful mechanism satisfying very minimal additional properties.

An alternative that has gained much recent popularity,
in particular in the computer science community,
is to view social choice through the lens of
\emph{optimization and approximation}.
In this line of work (e.g., 
\cite{BCHLPS:utilitarian:distortion,caragiannis:procaccia:voting,procaccia:approximation:gibbard,procaccia:rosenschein:distortion}),
it is assumed that one can quantify the utility (or cost) that a voter
derives from a candidate.
These individual utilities or costs can then be aggregated into a
social welfare or cost, e.g., by taking the average or median.
The social welfare/cost captures how good of a choice a candidate is
for the voter population overall.

The problem with this approach, articulated clearly in
\cite{boutilier:rosenschein:incomplete,anshelevich:bhardwaj:postl},
is that voting mechanisms typically allow voters only to communicate a
\emph{ranking} of candidates, but not the actual utilities/costs;
furthermore, even if a mechanism provided a way to communicate
numerical scores, it is not clear that voters could compute or
estimate them accurately.
In other words, ``one can quantify'' is more of an abstract statement
than one referring to any decision maker involved in the process.
Thus, even though the voting mechanism must optimize a \emph{cardinal}
objective function, it only receives \emph{ordinal} information as
input, namely, for each voter, whether her\footnote{
  For ease of presentation, we use female pronouns for voters and
  male pronouns for candidates throughout.}
utility/cost for candidate $x$ is larger or smaller than that for
candidate $y$.

As a result, mechanisms must optimize the social welfare
\emph{robustly}, choosing a candidate that has high welfare regardless
of what the actual cardinal objective values are --- so long as they are
consistent with the reported ordinal rankings.
The \emph{distortion} of a mechanism is the worst-case ratio between
the welfare/cost of the mechanism's selected (based only on ordinal
information) candidate and the optimum (with full knowledge of the
cardinal values) candidate, over all possible inputs.
(Formal definitions of this concept and all other terms can be found
in Section~\ref{sec:preliminaries}.)

Our discussion so far has been in terms of general utilities/cost.
While some positive results can be obtained for fairly general classes
of utility functions (e.g.,
\cite{BCHLPS:utilitarian:distortion,caragiannis:procaccia:voting,procaccia:approximation:gibbard,procaccia:rosenschein:distortion}), 
stronger results are achievable when the functions take more specific
forms.
A particularly natural way of defining costs is in terms of a joint
\emph{metric space} defined on candidates and voters,
where the distance \Dist{v}{x} between voter $v$ and candidate $x$
captures their difference in opinion, and hence the cost.
Voters then rank candidates by non-decreasing distance from themselves.%
\footnote{Such distance-based rankings had been considered
in a large body of earlier work, e.g.,
\cite{black:rationale,black:committees-elections,downs:democracy,moulin:single-peak,merrill:grofman,barbera:gul:stacchetti,barbera:social-choice},
though most of the listed papers studied such rankings specifically
when the metric is the line; 
such preference orders are often called \emph{single-peaked}.}
The approach of using the distances explicitly as the cost objective
for optimization was proposed in \cite{anshelevich:bhardwaj:postl};
\cite{anshelevich:bhardwaj:elkind:postl:skowron} is an
expanded/improved journal version,
and \cite{anshelevich:ordinal} provides a broader overview of the area
and its results.
While \cite{anshelevich:bhardwaj:elkind:postl:skowron} consider both
the average and median of all voters' costs as the overall objective,
here, we focus solely on the average/total cost.

The main result of 
\cite{anshelevich:bhardwaj:postl,anshelevich:bhardwaj:elkind:postl:skowron}
is that under the model of metric costs,
many widely used voting rules
(including Plurality, Veto, Borda count, and others)
have distortion linear in the number of candidates or worse.
Furthermore, even with just 2 candidates and a 1-dimensional metric
space, \emph{every} deterministic voting mechanism has 
distortion at least 3.
On the positive side,
\cite{anshelevich:bhardwaj:postl,anshelevich:bhardwaj:elkind:postl:skowron}
show that any rule which always outputs a candidate from the uncovered
set of candidates has distortion at most 5,
for all metric spaces and numbers of candidates.
Uncovered sets are defined in terms of the tournament
graph $G$ on $n$ candidates in which the directed edge $(x,y)$ is
present iff a (weak) majority of voters prefer $x$ to $y$.
The uncovered set is the set of candidates that have
a directed path of length at most 2 in $G$ to every other candidate
(see \cite{moulin:choosing-tournament}).
Very recently, Munagala and Wang \cite{munagala:wang:improved}
gave a voting rule based on uncovered sets in a weighted tournament
graph which improves the upper bound from 5 to
$2+\sqrt{5} \approx 4.236$.

There is an obvious gap between the lower bound of 3 for the
distortion of every mechanism, and the upper bound of $2+\sqrt{5}$.
In the original version of \cite{anshelevich:bhardwaj:postl},
it was conjectured that a mechanism called \emph{Ranked Pairs}
(defined in Section~\ref{sec:preliminaries})
achieves a distortion of 3.
This conjecture was disproved by \cite{goel:krishnaswamy:munagala},
who showed a lower bound of 5 on the distortion of Ranked Pairs (and
the Schulze rule, also defined in Section~\ref{sec:preliminaries}).

The proof of the upper bound of 5, the recent upper bound of $2+\sqrt{5}$,
and many other proofs in the literature are based on reasoning about
all metric spaces that are consistent with assumed rankings.
They often involve intricate case distinctions and rather ad
hoc arguments.
So far, a more solid foundation and framework for distortion proofs
has been missing from the literature.

\subsection{Our Contribution}
Our main contribution, presented in Section~\ref{sec:primal-dual},
is an analysis framework based on LP duality and flows
for proving upper bounds on the metric distortion of voting mechanisms.
Our point of departure is a well-known linear program for the
following problem:
given the rankings of all voters, a winning candidate
(presumably selected by a mechanism) and an ``optimum'' candidate,
find a metric space maximizing the distortion of this choice;
that is, find a metric that makes the selected winner as expensive as
possible, subject to the ``optimum'' candidate having cost 1.%
\footnote{This approach can of course immediately be leveraged into an
  optimal polynomial-time voting mechanism;
  we discuss this more in Section~\ref{sec:optimal}.}
We show that the dual of the cost minimization LP can be interpreted
as a flow problem with an unusual objective function.
Using this framework, in order to show an upper bound on the metric
distortion of a particular mechanism,
rather than having to explicitly consider all possible metric spaces,
it is enough to exhibit a flow of small cost meeting certain demands.
We illustrate the power of this analysis framework with three applications.

First, in Section~\ref{sec:uncovered},
we give a strong generalization of the key lemmas from
\cite{anshelevich:bhardwaj:postl} (Theorem~7)
and \cite{munagala:wang:improved} (Lemma~3.7),
used to prove distortions of 5 and $2+\sqrt{5}$ for the respective
mechanisms under consideration.
The common idea of both is that when a large enough fraction of voters
prefer $x$ to $y$, and a large enough fraction prefer $y$ to $z$,
then the cost of $x$ can be bounded in terms of the cost of $z$.
Theorem~7 of \cite{anshelevich:bhardwaj:postl} is the special case
where both fractions are \half,
while Lemma~3.7 of \cite{munagala:wang:improved} is the case
when the first fraction is $\frac{3-\sqrt{5}}{2}$,
and the second is $\frac{\sqrt{5}-1}{2}$.
These bounds immediately imply the upper bounds on the distortion 
for any candidate in the uncovered set of a suitably defined
tournament graph.
We give a generalization to arbitrary chains of preferences,
and upper-bound the cost of $x_1$ in terms of the cost of $x_{\ell}$
when a \PrefFrac{i} fraction of voters prefer $x_i$ over $x_{i+1}$,
for each $i=1, \ldots, \ell-1$.
For the specific case when all $\PrefFrac{i} = \PREFFRAC$,
the bound can be stated very cleanly:
the cost of $x_1$ is at most $\frac{\ell}{\PREFFRAC} - 1$ times that
of $x_{\ell}$ if $\ell$ is even, 
and at most $\frac{\ell-1}{\PREFFRAC} + 1$ times that of $x_{\ell}$ if
$\ell$ is odd.
Our results fully recover and generalize the bounds of
\cite{anshelevich:bhardwaj:postl} and \cite{munagala:wang:improved}.
The generalization to longer path lengths can be useful in
analyzing voting mechanisms that are missing information.
This can happen if the environment restricts the
communication between voters and the mechanism,
so that parts of the rankings remain unknown,
as in \cite{DistortionCommunication}.
In fact, the results of Section~\ref{sec:uncovered} can be used to
significantly improve the upper bounds on the performance of
``Copeland-like'' mechanisms with missing information,
compared to the bounds in \cite{DistortionCommunication}.

As a direct application of this generalized bound,
in Section~\ref{sec:rp-schulze},
we resolve the distortion of the Ranked Pairs and Schulze rules
(defined in Section~\ref{sec:preliminaries}):
we show that both have distortion $\Theta(\sqrt{n})$.
The upper bound is a clean application of the lemma bounding
distortion via longer chains of preferences,
while the lower bound is obtained with a generalization of the
example which \cite{goel:krishnaswamy:munagala} used to
lower-bound the distortion of both rules by 5.
The distortion of both rules is thus significantly higher than
the distortions of 5 and $2+\sqrt{5}$ achieved by the uncovered set
mechanisms.
Understanding the distortion of the Schulze rule in
particular is of importance because it is widely used in practice.

As a third application, the flow interpretation naturally
suggests a candidate mechanism that might achieve 
distortion 3, which we present in Section~\ref{sec:combinatorial}.
The analysis points to a sufficient condition for distortion 3:
that for every given preference profile of the voters,
there be a candidate $x$ such that for all other candidates $y$,
a certain bipartite graph on the voters have a perfect matching.
In fact, the mechanism itself can be phrased in this terminology,
leading to a purely combinatorial polynomial-time mechanism.

This mechanism was independently discovered and presented in
\cite{munagala:wang:improved}.
In \cite{munagala:wang:improved}, it is also shown --- again with a
case distinction proof over metric spaces --- that if such a
candidate $x$ exists, the mechanism guarantees distortion 3.
Our duality framework gives a cleaner and simpler proof
of this fact.
The main question is then whether the desired candidate $x$ always
exists.

Munagala and Wang \cite{munagala:wang:improved} conjecture --- as do
we --- that it does.
They phrase a conjecture which is essentially a restatement of the
fact that the algorithm succeeds in finding a candidate $x$.
In Section~\ref{sec:combinatorial}, we present a slight rephrasing of
this conjecture, along with two more very different-looking (in fact,
much more self-contained) conjectures, each of which would resolve the
question positively, i.e., establish a distortion of 3.
One of the two new conjectures is phrased in terms of certain preferences
between candidates and sets under randomly drawn preference orders,
while another talks about cycles in certain induced subsets of a type
of graph we define.
The fact that they are sufficient to establish distortion 3 is based
on Hall's Marriage Theorem for bipartite graphs.
We have verified the conjecture by hand for $n \leq 4$ candidates,
and using exhaustive computer search for $n \leq 7$.
Resolving any of the three conjectures positively would answer the key
open question of the field of metric voting,
closing the gap between the upper bound of $2+\sqrt{5}$
on the best distortion of any deterministic mechanism,
and the lower bound of 3. 

\subsection{Additional Related Work}

The observation that mechanisms may have to optimize a cardinal
objective function while only given ordinal information (i.e.,
rankings) extends beyond just voting mechanisms, to more general
problems.
See, e.g., \cite{anshelevich:sekar:blind,anshelevich:ordinal} for
results on other optimization problems under ordinal information.

The lower bound of 3 on the distortion of any mechanism is based on
worst-case input instances.
Better bounds can be obtained when additional assumptions are placed
on the instances.
As one example, 
\cite{anshelevich:postl:randomized,gross:anshelevich:xia:agree}
show that when instances are \emph{decisive}, in the sense that each
voter has a candidate she strongly prefers over all others,
better upper bounds on the distortion are obtained.
As another example, when the candidates are drawn i.i.d.~from the set
of all voters, \cite{OfThePeople} gives improved constant distortion
bounds in the case of two candidates, while
\cite{BordaRepresentative} shows that many position-based scoring
rules now achieve constant distortion (instead of linear).

The lower bound of 3 on the distortion of voting mechanisms only
applies to deterministic mechanisms.
Randomization can lead to lower distortion \cite{anshelevich:postl:randomized}.
For example, it is known that the Randomized Dictatorship mechanism,
which outputs the first choice of a uniformly random voter,
has distortion strictly smaller than 3.

Our work ignores the issue of incentives, i.e., whether voters
truthfully report their preferences.
The connection between strategy proofness and distortion in metric
voting is studied in \cite{feldman:fiat:golomb}.

The use of LP duality for analyzing the performance of optimization
algorithms has a long history, e.g., in approximation algorithms
(see \cite{vazirani:approximation-algorithms}).
Another more recent example is the duality framework of Cai, Devanur,
and Weinberg \cite{cai:devanur:weinberg:duality}
(see also references in \cite{cai:devanur:weinberg:duality} to prior,
less general, work) for analyzing the revenue of Bayesian Incentive
Compatible mechanisms.
In their case as well, dual solutions can be interpreted as flows,
and Cai et al.~obtain performance guarantees by exhibiting particular
types of ``canonical'' flows that can be interpreted as witnesses for
the revenue guarantees.
While this work and ours have the use of duality, and the
interpretation as flows, in common, the specific technical details are
very different.

\section{Preliminaries} \label{sec:preliminaries}
\subsection{Voters, Candidates, and Social Choice Rules}

An \emph{instance} $(\ALLCANDS, \ALLPREFS)$ consists of a set
of $n$ candidates \ALLCANDS, and the voters' preferences \ALLPREFS
among these candidates.
Candidates will always be denoted by lowercase letters $\WINNER, x, y,
z$ (and their variations), with \WINNER 
specifically reserved for a candidate chosen as winner by a mechanism
(which will be clear from the context).
Sets of candidates are denoted by uppercase letters $X, Y, Z$.
The $m$ voters are denoted by $v,v'$ and variants thereof,
and the set of all voters is \ALLVOTERS.

Each voter $v$ has a \emph{total order} (or \emph{preference order} or
\emph{ranking} --- we use the three terms interchangeably) \PREF[v]
over the $n$ candidates.
We write \Pref[v]{x}{y} to denote that $v$ (strictly) prefers $x$ over $y$,
and \WPref[v]{x}{y} to denote that $v$ weakly prefers $x$ over $y$;
the difference is that the latter allows $x=y$.
We extend this notation to sets, writing, for instance,
\Pref[v]{Y}{Z} to denote that $v$ (strictly) prefers all candidates in
$Y$ over all candidates in $Z$.
We write $\CandFirst{x}{Y} = \Set{v \in \ALLVOTERS}{\Pref[v]{x}{Y}}$
for the set of voters who rank $x$ strictly ahead of all candidates in
$Y$,
and $\CandLast{x}{Y} = \Set{v \in \ALLVOTERS}{\Pref[v]{Y}{x}}$
for the set of voters who rank $x$ strictly behind all candidates in $Y$.

A \emph{vote profile} \ALLPREFS is the vector of the rankings of all
voters $\ALLPREFS = (\PREF[v])_{v \in \ALLVOTERS}$.
A \emph{social choice rule}
(we use the term \emph{mechanism} interchangeably) 
$\SCRULE : (\ALLCANDS, \ALLPREFS) \mapsto \WINNER$
is given the rankings of all voters, i.e., \ALLPREFS,
and deterministically produces as output one \emph{winning} candidate
$\WINNER = \SCRule{\ALLCANDS, \ALLPREFS} \in \ALLCANDS$.

\subsection{(Pseudo-)Metric Space and Distortion}

The voter preferences are assumed to be derived from distances between
voters and candidates.
The distance \Dist{v}{x} between voter $v$ and candidate $x$ captures
how similar their positions on key issues are.
The distances \DIST form a \emph{pseudo-metric}, i.e.,
they are non-negative and satisfy the triangle inequality%
\footnote{Distances between pairs of voters, or between pairs of
  candidates, could be defined using shortest-path distances;
  however, they are irrelevant for our analysis.
  \emph{Symmetry}, another defining property of a pseudo-metric,
  would arise automatically when using this definition.}
$\Dist{v}{x} \leq \Dist{v}{y} + \Dist{v'}{y} + \Dist{v'}{x}$
for all voters $v, v'$ and candidates $x, y$.

A vote profile \ALLPREFS is \emph{consistent} with the
pseudo-metric \DIST if and only if each voter ranks the candidates by 
non-decreasing distance from herself;
that is, if \Pref[v]{x}{y} whenever $\Dist{v}{x} < \Dist{v}{y}$.
When \ALLPREFS is consistent with \DIST, we write \Legal{\DIST}{\ALLPREFS}.
If there are ties among distances,
several vote profiles will be consistent with \DIST.

\begin{definition}[Social Cost, Distortion]
\begin{enumerate}
\item The \emph{social cost} of candidate $x$ is the sum of distances
  from $x$ to all voters: $\Cost{x} = \sum_{v} \Dist{v}{x}$.
\item A candidate is an \emph{optimum candidate} iff he minimizes%
  \footnote{There could be multiple optimum candidates --- for our
    analysis, it will never matter which of them is designated as
    ``the'' optimum.}
  the social cost: $\OPT[\DIST] \in \argmin_{x \in \ALLCANDS} \Cost{x}$.
\item The \emph{distortion} of a mechanism \SCRULE is the largest
  possible ratio between the cost of the candidate chosen by \SCRULE,
  and the optimal
  (with respect to the pseudo-metric \DIST, which \SCRULE does not know)
  candidate \OPT[\DIST]:
\[
  \Distortion{\SCRULE} \; = \;
  \max_{\ALLPREFS} \sup_{\DIST: \Legal{\DIST}{\ALLPREFS}}
  \frac{\Cost{\SCRule{\ALLCANDS, \ALLPREFS}}}{\Cost{\OPT[\DIST]}}.
\]
\end{enumerate}
\end{definition}

The main cause for (large) distortion is that while
the social choice rule knows the voter preferences \ALLPREFS,
it does not know the pseudo-metric \DIST.
We can think of the pseudo-metric \DIST as being chosen
adversarially,
based on the winning candidate \WINNER = \SCRule{\ALLPREFS} chosen by
the mechanism.
However, the adversary is constrained by having to ensure that \DIST
is consistent with \ALLPREFS.

\subsection{Ranked Pairs and the Schulze Rule}
In Section~\ref{sec:rp-schulze}, we will characterize the distortion
of the Ranked Pairs and Schulze Rules.
We briefly review these rules here.
Both are based on a weighted directed graph on the set of candidates
\ALLCANDS.
The weight $p_{x,y}$ of the edge from candidate $x$ to $y$
is the fraction of voters who have \Pref{x}{y}.
As a result, $p_{x,y} + p_{y,x} = 1$ for all $x, y$.

In Ranked Pairs \cite{tideman:independence-of-clones},
the (ordered) pairs $(x,y)$ are considered in
non-increasing order of $p_{x,y}$.
When the pair $(x,y)$ is considered,
the directed edge $(x,y)$ is inserted into the graph if and only if
doing so creates no cycle.
When the insertion process terminates, the graph has a unique source
node, which is returned as the winner.

In the Schulze Method \cite{schulze:single-winner-election-method},
a directed weighted graph is created in which
each ordered pair $(x,y)$ has an edge with weight $p_{x,y}$.
Then, for each pair $(x,y)$, let $s_{x,y}$ be the width of the widest
path from $x$ to $y$, that is, the largest $p$ such that there is a
path from $x$ to $y$ on which all edges $(x',y')$ have
$p_{x',y'} \geq p$.
It has been shown \cite{schulze:single-winner-election-method}
that there is a candidate node $x$ such
that $s_{x,y} \geq s_{y,x}$ for all other candidates $y$.
Any such candidate $x$ is returned as the winner.

For the purposes of our analysis, the only property of these methods
that matters is captured by the following lemma, which is well known.
(We prove it only for completeness.)

\begin{lemma} \label{lem:rp-schulze-basic}
  Let \WINNER be the candidate selected by the rule
  (either Ranked Pairs or Schulze),
  and $y$ any other candidate.
  Then, there exists a $p$ and a sequence of (distinct) candidates
  $x_1 = \WINNER, x_2, \ldots, x_{\ell} = y$ with the property that
  at least a $p$ fraction of voters prefer $x_i$ over $x_{i+1}$
  (for each $i$),
  and at most a $p$ fraction of voters prefer $y$ over \WINNER.
\end{lemma}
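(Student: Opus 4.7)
The plan is to set $p := p_{y,\WINNER}$ in both cases, which immediately satisfies the ``at most a $p$ fraction of voters prefer $y$ over $\WINNER$'' condition with equality. The remaining task is to exhibit a sequence of distinct candidates $\WINNER = x_1, x_2, \ldots, x_\ell = y$ with $p_{x_i, x_{i+1}} \geq p$ for every $i$. I would handle the two rules separately, as their definitions suggest very different arguments.

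For the Schulze rule, the proof is almost immediate from the definition. The single directed edge from $y$ to $\WINNER$ is itself a $y$-to-$\WINNER$ path of width $p_{y,\WINNER} = p$, so $s_{y,\WINNER} \geq p$. Because $\WINNER$ is selected as winner, $s_{\WINNER,y} \geq s_{y,\WINNER} \geq p$, which means there is a path from $\WINNER$ to $y$ on which every edge $(x',y')$ has $p_{x',y'} \geq p$. Taking any simple sub-path from $\WINNER$ to $y$ inside this walk yields the required sequence of distinct candidates.

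For Ranked Pairs I would argue by tracing the execution of the algorithm and split on the order in which $(\WINNER, y)$ and $(y, \WINNER)$ are considered. If $p_{\WINNER,y} > p_{y,\WINNER}$, then $(\WINNER,y)$ is processed first (at weight strictly greater than $p$). I claim this edge is actually inserted: otherwise, insertion would create a cycle at that moment, so a $y$-to-$\WINNER$ path already exists in the graph; since edges are never removed, this path persists in the final DAG, contradicting that $\WINNER$ is the unique source. Hence $(\WINNER,y)$ is present in the final graph, and the single-edge sequence $\WINNER, y$ suffices. Otherwise $p_{y,\WINNER} \geq p_{\WINNER,y}$, and at the step when $(y,\WINNER)$ is considered (at weight $p$), it must be rejected (again because $\WINNER$ is a source of the final DAG). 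Rejection means adding $(y,\WINNER)$ would close a cycle, i.e., a $\WINNER$-to-$y$ path already exists in the graph at that moment. Every edge on that path was processed before $(y,\WINNER)$, so each has weight $\geq p$; extracting a simple sub-path from $\WINNER$ to $y$ yields the desired sequence.

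The only mildly delicate step is the Ranked Pairs case: one has to separate whether the ordered pair $(y,\WINNER)$ is ever the first pair in its unordered pair to be considered, and to notice that ``$\WINNER$ is the source in the final DAG'' is strong enough to both force $(\WINNER,y)$ to be accepted in the first sub-case and force $(y,\WINNER)$ to be rejected in the second. Tie-breaking (when $p_{\WINNER,y} = p_{y,\WINNER} = 1/2$) is harmless, since whichever ordered pair is processed first, the same reasoning applies. Everything else reduces to monotonicity of the edge-insertion process and the definition of ``cycle'', so I do not anticipate any other obstacle.
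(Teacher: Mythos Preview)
Your proof is correct and follows essentially the same approach as the paper: for Schulze you use the defining inequality $s_{\WINNER,y} \geq s_{y,\WINNER}$ together with the single-edge $y$-to-$\WINNER$ path, and for Ranked Pairs you use that the source $\WINNER$ forces the edge $(y,\WINNER)$ to be rejected, yielding a $\WINNER$-to-$y$ path of earlier (hence heavier) edges. The only cosmetic differences are that the paper chooses $p = s_{\WINNER,y}$ for Schulze (rather than your $p = p_{y,\WINNER}$) and handles Ranked Pairs without your explicit case split, observing directly that $(y,\WINNER)$ must have been rejected.
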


\begin{proof}
  For the Ranked Pairs rule, because \WINNER was selected, it
  has no incoming edges in the DAG that is constructed.
  In particular, this means that Ranked Pairs did not insert the
  edge $(y, \WINNER)$, so when it was considered for insertion,
  it would have caused a cycle, meaning that there was a path from
  \WINNER to $y$ all of whose edges had been inserted earlier.
  By the definition of the Ranked Pairs insertion order, this
  means that all edges on this path had a higher fraction of voters
  agreeing with them, giving us the path claimed above.

  For the Schulze rule, recall that the winner has the
  property that $s_{\WINNER,x} \geq s_{x,\WINNER}$ for all
  candidates $x$. Let $p=s_{\WINNER,y}$.
  Then, there is a path from \WINNER to $y$ in which each edge
  corresponds to a preference by at least a $p$ fraction of voters.
  On the other hand, because $(y, \WINNER)$ is a path from $y$
  to \WINNER, at most an $s_{y,\WINNER} \leq s_{\WINNER,y}$
  fraction of voters can prefer $y$ to \WINNER.
\end{proof}

\section{The LP Duality Approach and Flows} \label{sec:primal-dual}
In this section, we develop the key tool for our analysis:
the dual linear program for distortion in metric voting.

The voters' preferences $\ALLPREFS = (\PREF[v])_v$ are given.
Let \WINNER be a candidate that the mechanism is considering as a
potential winner, and \LPC the optimal candidate.
Following \cite{anshelevich:bhardwaj:elkind:postl:skowron,goel:krishnaswamy:munagala},
we phrase the adversary's problem of finding the distortion-maximizing
metric as a linear program whose variables \Lpm{x}{v} denote
distances between voters $v$ and candidates $x$.
These distances must be non-negative, obey the triangle inequality,
and be consistent with the reported preferences of the voters.
The objective is to maximize the distortion,
i.e., the ratio between the cost of \WINNER and the cost of \LPC.

\begin{LP}[eqn:primal-lp]{Maximize}{\sum_v \Lpm{\WINNER}{v}}
  \Lpm{x}{v} \leq \Lpm{x}{v'} + \Lpm{y}{v'} + \Lpm{y}{v}
  & \mbox{ for all } x, y, v, v' \mbox{ (Triangle Inequality)}\\
  \Lpm{x}{v} \leq \Lpm{y}{v}
  & \mbox{ for all } x, y, v \mbox{ such that } \Pref[v]{x}{y}
  \mbox{ (consistency)}\\
  \sum_v \Lpm{\LPC}{v} = 1 & \mbox{ (normalization)}\\
  \sum_v \Lpm{x}{v} \geq 1 & \mbox{ for all $x$ (optimality of \LPC)}\\
  \Lpm{x}{v} \geq 0 & \mbox{ for all } x, v.
\end{LP}

As is standard in the use of LPs for optimizing a ratio,
the normalization side-steps the issue of having to write a ratio:
for any worst-case metric, one could simply rescale all distances by a
constant so that the normalization holds --- this does not change
any ratios, and thus also not the distortion.

\subsection{An Efficient Optimal Mechanism} \label{sec:optimal}

As already observed in
\cite{anshelevich:bhardwaj:elkind:postl:skowron,goel:krishnaswamy:munagala},
the LP \eqref{eqn:primal-lp} can be leveraged to immediately yield an
instance-optimal polynomial-time mechanism for minimizing distortion,
as follows.
Given the voter preferences \PREF[v],
let \OptDist{\WINNER}{\LPC} denote the maximum LP objective of the
LP~\eqref{eqn:primal-lp} for the winner \WINNER and putative optimum
\LPC.
The distortion for \WINNER as a winner is then
$\WOptDist{\WINNER} = \max_{\LPC} \OptDist{\WINNER}{\LPC}$.
The mechanism returns as winner any candidate in
$\argmin_{\WINNER} \WOptDist{\WINNER}$.

Because the algorithm only involves solving $n^2$ linear programs,
it runs in polynomial time.
By definition (and correctness of the LP~\eqref{eqn:primal-lp}),
the distortion for a given vote profile \ALLPREFS
and winner \WINNER is \WOptDist{\WINNER};
thus, the mechanism does indeed minimize distortion.
Unfortunately, as also observed in
\cite{anshelevich:bhardwaj:elkind:postl:skowron},
it is not immediate from the mechanism and the LP formulation how to
bound the distortion for all vote profiles;
though \cite{goel:krishnaswamy:munagala} conjecture that the LP-based
algorithm guarantees distortion at most 3.

The dual program provides a very useful tool towards making the
LP-based algorithm combinatorial,
and for reducing an analysis of its distortion to simpler
combinatorial conjectures.
More generally (and perhaps importantly),
the dual program provides a general approach for bounding the metric
distortion of other voting rules, too.

\subsection{The Dual Linear Program} \label{sec:dual}
Rearranging the primal LP into normal form, taking the dual,
and switching the signs of the \DNo{x} variables (for clarity)
yields the following dual LP~\eqref{eqn:dual-lp}.
In this LP, the \DTe{v}{v'}{x}{y} are the dual variables for the
triangle inequality constraints,
\DCo{v}{x}{y} are the dual variables for the consistency constraints,
and the \DNo{x} are the dual variables for the
normalization/optimality constraints.

\begin{LP}[eqn:dual-lp]{Minimize}{\sum_x \DNo{x}}
\multicolumn{2}{l}{  \DNo{x}
  + \sum_{y: \Pref[v]{x}{y}} \DCo{v}{x}{y}
  - \sum_{y: \Pref[v]{y}{x}} \DCo{v}{y}{x}
  + \sum_{y, v'}
    \left(  \DTe{v}{v'}{x}{y} - \DTe{v}{v'}{y}{x}
          - \DTe{v'}{v}{x}{y} - \DTe{v'}{v}{y}{x} \right)}
  \\ \qquad \geq \;
  \begin{cases} 1 \mbox{ if } x = \WINNER \\
                0 \mbox{ if } x \neq \WINNER \end{cases} 
        & \mbox{ for all } v, x \\
  \DTe{v}{v'}{x}{y} \geq 0 & \mbox{ for all } v, v', x, y \\
  \DCo{v}{x}{y} \geq 0 & \mbox{ for all } v, x, y \\
  \DNo{x} \leq 0 & \mbox{ for all } x \neq \LPC.
\end{LP}

Notice that \DNo{\LPC} is in fact unconstrained,
due to the equality constraint in the normalization.

The advantage of studying the dual linear program instead of the
primal (or reasoning about the distortion directly) is that
it omits any reference to any metric space.
Rather than having to reason about all candidate metric spaces
consistent with given voting patterns, by weak duality,
we only have to exhibit one setting of the dual variables that yields
a small dual objective value.
Thus, our goal in analyzing a mechanism will be to show that for any
voter preferences \ALLPREFS, 
with a suitably chosen winner \WINNER, there is a setting of dual
variables giving a small objective value.

\subsection{Using the Dual by Exhibiting Flows}

The LP~\eqref{eqn:dual-lp} looks rather unwieldy,
mostly due to the terms involving the \DTe{v}{v'}{x}{y} variables.
However, by making some specific choices for these variables,
it can be interpreted as a flow\footnote{%
Some sources use the word ``flow'' only when there is a single source
and a single sink; here, we will have multiple sources and sinks.
We will still use the word ``flow'' in a generic sense.}
problem on a suitably defined graph,
with a somewhat unusual objective function.
This is captured by the following lemma:

\begin{lemma} \label{lem:dual-flow}
  Let $H = (U, E)$ be a directed graph with vertex set
  $U = \ALLVOTERS \times \ALLCANDS$, and edges defined as follows:
  \begin{itemize}
  \item Whenever \Pref[v]{x}{y},
    $E$ contains the directed edge $(v,x) \to (v,y)$.
    We call such edges \emph{preference edges}.
  \item For all $x$ and $v \neq v'$,
    $E$ contains the directed edge $(v,x) \to (v',x)$.
    We call such edges \emph{sideways edges}.
  \end{itemize}

  Let $f$ be a flow on $H$ such that exactly one unit of flow
  originates at the node $(v,\WINNER)$ for each voter $v$,
  and flow is only absorbed at nodes $(v,\LPC)$ for voters $v$.
  Define the cost of $f$ at voter $v$ to be
  $\FlCost{f}{v} = \sum_{e \text{ into } (v,\LPC)} f_e
  + \sum_{x \neq \LPC} \sum_{v' \neq v}
  (f_{(v',x) \to (v,x)} + f_{(v,x) \to (v',x)})$.

  Then, $\Cost{\WINNER} \leq \Cost{\LPC} \cdot \max_v \FlCost{f}{v}$.
\end{lemma}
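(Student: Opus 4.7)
The plan is to apply weak LP duality to \eqref{eqn:primal-lp}: given the flow $f$, I will exhibit a feasible dual solution to \eqref{eqn:dual-lp} whose objective value is $\max_v\FlCost{f}{v}$. Because the primal optimum, under the normalization $\Cost{\LPC}=1$, is exactly $\Cost{\WINNER}$, weak duality then gives $\Cost{\WINNER}\leq\max_v\FlCost{f}{v}$, and removing the normalization yields the stated inequality $\Cost{\WINNER}\leq\Cost{\LPC}\cdot\max_v\FlCost{f}{v}$.

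The dual solution is read directly off the edge-types of $H$. For each preference edge $(v,x)\to(v,y)$, set $\DCo{v}{x}{y}:=f_{(v,x)\to(v,y)}$. For each sideways edge $(v,x)\to(v',x)$ with $x\neq\LPC$, set $\DTe{v}{v'}{x}{\LPC}:=f_{(v,x)\to(v',x)}$, and leave every other $\psi$-variable equal to zero; in particular, $\DTe{v}{v'}{y}{x}=0$ whenever $x\neq\LPC$. Finally, set $\DNo{x}:=0$ for $x\neq\LPC$ and $\DNo{\LPC}:=\max_v\FlCost{f}{v}$. Non-negativity of the $\phi$ and $\psi$ variables, the sign condition $\DNo{x}\leq 0$ for $x\neq\LPC$, and the evaluation $\sum_x\DNo{x}=\max_v\FlCost{f}{v}$ are all immediate.

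What remains is verification of the dual constraint at every pair $(v,x)$, which becomes a flow-conservation statement in $H$. For $x\neq\LPC$, the preference part of the LHS equals the net preference outflow at $(v,x)$, while the quadruple $\DTe{v}{v'}{x}{y}-\DTe{v}{v'}{y}{x}-\DTe{v'}{v}{x}{y}-\DTe{v'}{v}{y}{x}$ collapses — since the two $\psi$-terms with $x$ in the second slot vanish by our construction — to the net sideways outflow at $(v,x)$. Their sum is the total net outflow at $(v,x)$, which by flow conservation equals $\mathbf{1}[x=\WINNER]$, matching the RHS with $\DNo{x}=0$. The subtle case is $x=\LPC$: now the two $\psi$-terms with $\LPC$ in the first slot vanish, while the subtracted terms instead pick up the sideways flows at \emph{every} non-$\LPC$ candidate $y$ in both directions. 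Combining these with the preference-term contribution and applying flow conservation at $(v,\LPC)$ (where some absorption $a_v$ occurs) yields a lower bound on the required value of $\DNo{\LPC}$; the $a_v$ term cancels against preference-flow terms, and what is left is bounded above by $\FlCost{f}{v}$, so the choice $\DNo{\LPC}=\max_v\FlCost{f}{v}$ satisfies every such constraint. The main obstacle is purely the sign bookkeeping for the four $\psi$-terms — specifically, seeing why they cancel in pairs for $x\neq\LPC$ but fail to cancel when $x=\LPC$, leaving behind exactly the sideways contributions that define $\FlCost{f}{v}$. Once that accounting is in place, weak duality finishes the proof.
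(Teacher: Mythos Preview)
Your proposal is correct and follows essentially the same approach as the paper: construct a feasible solution to the dual~\eqref{eqn:dual-lp} by setting the $\phi$-variables to preference-edge flows, the $\psi$-variables (with $\LPC$ in the second subscript) to sideways-edge flows, $\DNo{\LPC}=\max_v\FlCost{f}{v}$, and everything else to zero, then invoke weak duality. The one cosmetic difference is that you set $\DTe{v}{v'}{\LPC}{\LPC}=0$ while the paper sets it to $f_{(v,\LPC)\to(v',\LPC)}$; both choices yield feasibility, and your accounting at $(v,\LPC)$---though phrased a bit loosely around the ``$a_v$ cancels'' remark---correctly lands on the bound $\DNo{\LPC}\geq$ (something $\leq\FlCost{f}{v}$).
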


The graph $H$ has two types of edges.
For any fixed voter $v$, the preference edges
$(v,x) \to (v,y)$ (over all candidate pairs $x, y$) exactly correspond
to $v$'s preference order.
For any fixed candidate $x$, the sideways edges
$(v,x) \to (v',x)$ (over all voter pairs $v, v'$) form a complete
directed graph.

The flow's cost function has two terms for each voter $v$.
The first is fairly standard in the study of multi-commodity flows:
the capacity required at the sink node $(v,\LPC)$ to be able
to absorb all of the flow.
The second one is rather non-standard: for each voter $v$,
there is an additional penalty term for all incoming and outgoing
flows of nodes $(v,x)$ for $x \neq \LPC$ along sideways edges.
In other words, using preference edges is much
less costly than using sideways edges:
the former just route flow, while the latter route the flow,
but also incur a cost penalty at both endpoints.

\begin{proof}
  Let $f$ be a flow with one unit of flow originating at each
  node $(v,\WINNER)$,
  such that flow is only absorbed at nodes $(v,\LPC)$.
  We define dual variables, and show that these dual variables are
  feasible.
  Then, we will obtain the statement of the lemma by weak LP duality. 

  For each triple $v,x,x'$, we set
  $\DCo{v}{x}{x'} = f_{(v,x) \to (v,x')}$.
  For each triple $v,v',x$, we set
  $\DTe{v}{v'}{x}{\LPC} = f_{(v,x) \to (v',x)}$;
  notice that we carefully choose \LPC as the additional candidate for
  the dual variable.
  Finally, we set $\DNo{\LPC} = \max_v \FlCost{f}{v}$.
  All other dual variables are set to 0.

  We now verify that this assignment satisfies all dual constraints.
  First, because $\DTe{v}{v'}{x}{y} = 0$ and $\DNo{y} = 0$
  whenever $y \neq \LPC$, 
  the dual constraints for all $x \neq \LPC$ are exactly circulation
  constraints; that is, they require that (at least) one unit of flow
  originate with $(v,\WINNER)$, and that flow be conserved
  (or appear) at each node $x \neq \LPC, x \neq \WINNER$.
  Thus, all of these constraints are satisfied for the given dual
  variable assignment.

  For pairs $(v,\LPC)$, the left-hand side of the dual constraint
  totals the flow into $(v,\LPC)$ along any edges
  (these are the \DCo{v}{x}{\LPC} variables and the
  \DTe{v'}{v}{\LPC}{\LPC} variables),
  as well as all the \DTe{v}{v'}{x}{\LPC} and \DTe{v'}{v}{x}{\LPC}
  variables, for all $v' \neq v$.
  By definition of the dual variables, this is exactly the flow
  into $(v,\LPC)$,
  plus the flow into and out of all nodes $(v,x)$ for $x \neq \LPC$
  along edges of the form $(v,x) \to (v',x)$ and $(v',x) \to (v,x)$.
  Thus, it is exactly \FlCost{f}{v}.
  Because we set $\DNo{\LPC} = \max_v \FlCost{f}{v}$,
  the dual constraints for all pairs $(v,\LPC)$ are also satisfied.

  Since we have a dual feasible solution of objective value
  $\DNo{\LPC} = \max_v \FlCost{f}{v}$,
  by weak duality, for every metric, the cost of the primal
  is at most $\max_v \FlCost{f}{v}$.
  This completes the proof.
\end{proof}

\section{Generalization of Distortion Bounds for Undominated Nodes}
\label{sec:uncovered}
As a corollary of Lemma~\ref{lem:dual-flow},
we obtain a strong generalization of
Theorem~7 in \cite{anshelevich:bhardwaj:postl}
and Lemma~3.7 of \cite{munagala:wang:improved}
(which are given below for comparison).
The most general version can be stated as follows:

\begin{corollary} \label{cor:paths}
  Let $x_1, x_2, \ldots, x_\ell$ be (distinct) candidates such that
  for each $i = 2, \ldots, \ell$,
  at least a $\PrefFrac{i} > 0$ fraction of
  voters prefer candidate $x_{i-1}$ over candidate $x_{i}$.
  Define $\lambda_1 = 1, \lambda_2 = \frac{2}{\PrefFrac{2}}-1$,
  and $\lambda_i = \frac{2}{\PrefFrac{i}}$ for $2 < i \leq \ell$.
  Let $\Lambda = \max_{S \subseteq \SET{1, \ldots, \ell}, S \text{ indep.}} 
  \sum_{i \in S} \lambda_i$.
  (Here, independence of a set $S$ of natural numbers means that the
  set contains no two consecutive numbers.)
  Then, $\Cost{x_1} \leq \Lambda \cdot \Cost{x_\ell}$.
\end{corollary}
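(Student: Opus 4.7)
The plan is to apply Lemma~\ref{lem:dual-flow} with $\WINNER = x_1$ and $\LPC = x_\ell$, and to exhibit a flow $f$ on the graph $H$ with $\max_v \FlCost{f}{v} \leq \Lambda$. Writing $\SetNodes{i} = \CandFirst{x_{i-1}}{x_i}$ for the voters preferring $x_{i-1}$ to $x_i$, so that $|\SetNodes{i}| \geq \PrefFrac{i}\, m$, I build a layered flow: starting with one unit at each $(v, x_1)$, at every level $x_i$ for $i = 1, \ldots, \ell - 1$ I first redistribute the flow currently at level $x_i$ uniformly onto the voters in $\SetNodes{i+1}$ via sideways edges, then push it to level $x_{i+1}$ through the preference edges $(v, x_i) \to (v, x_{i+1})$ (which exist exactly for $v \in \SetNodes{i+1}$). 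Each sideways redistribution is routed greedily so that any voter in $\SetNodes{i} \cap \SetNodes{i+1}$ keeps its own mass up to the demand it must satisfy and contributes only its net excess or deficit.

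With this construction the per-voter cost decomposes level by level. Writing $a_i = \mathbf{1}[v \in \SetNodes{i}]$, $q_i = 1/\PrefFrac{i}$, $\tilde q_i = m/|\SetNodes{i}| \leq q_i$ (for $i \geq 2$), and adopting the convention $a_1 := \tilde q_1 := 1$, a direct accounting yields $\FlCost{f}{v} \leq \sum_{i=1}^{\ell-1} \bigl(a_i \tilde q_i + a_{i+1} \tilde q_{i+1} - 2 a_i a_{i+1} \min(\tilde q_i, \tilde q_{i+1})\bigr) + a_\ell \tilde q_\ell$. Regrouping and setting $S = \{i \geq 2 : a_i = 1\}$ rewrites this as
\[
\FlCost{f}{v} \;\leq\; 1 \;+\; 2 \sum_{i \in S} \tilde q_i \;-\; 2\, \mathbf{1}[2 \in S] \;-\; 2 \!\!\sum_{\substack{i,\,i+1 \in S\\ i \geq 2}}\!\! \min(\tilde q_i, \tilde q_{i+1}).
\]

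To bound the right-hand side by $\Lambda$, I decompose $S$ into its maximal runs $R$ of consecutive indices. The core combinatorial step is the identity
\[
\sum_{i \in R} \tilde q_i \;-\; \!\!\!\sum_{i,\,i+1 \in R}\!\!\! \min(\tilde q_i, \tilde q_{i+1}) \;=\; \tilde q_{\max R} + \!\!\!\sum_{\substack{i \in R\\ i < \max R}}\!\!\!(\tilde q_i - \tilde q_{i+1})^+,
\]
whose right-hand side (``final value plus total descent'') is at most the sum of the peaks (local maxima) of the sequence $(\tilde q_i)_{i \in R}$ --- a subset that is automatically independent in $R$. Summing over runs, bounding each $\tilde q_i$ by $q_i$, and translating between $q_i$ and $\lambda_i$ (using $\lambda_i = 2 q_i$ for $i \geq 3$ and $\lambda_2 = 2 q_2 - 1$) produces an independent set $S^\star \subseteq \{2, \ldots, \ell\}$ with $2 \sum_{i \in S^\star} q_i = \sum_{i \in S^\star} \lambda_i + \mathbf{1}[2 \in S^\star]$. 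A short case split on whether $2 \in S$ and whether $2 \in S^\star$ --- using that $\{1\} \cup S^\star$ is independent in $\{1, \ldots, \ell\}$ exactly when $2 \notin S^\star$, and that the $-\mathbf{1}[2 \in S]$ term absorbs the $-1$ built into $\lambda_2$ --- gives $\FlCost{f}{v} \leq \Lambda$ in every case. The main obstacle is precisely this final combinatorial step, in particular the peak-based run inequality together with the careful bookkeeping at the boundary indices $1$ and $2$, where the $\lambda$-values are non-generic and independence in $\{1,\ldots,\ell\}$ forbids selecting both; the flow construction and per-voter accounting are routine once Lemma~\ref{lem:dual-flow} is in hand.
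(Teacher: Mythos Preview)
Your proposal is correct and follows essentially the same approach as the paper: both use the identical layered flow construction with greedy retention on $V_i\cap V_{i+1}$, and both bound the per-voter cost by the $\lambda$-weight of an independent set of ``peak'' indices. The only difference is organizational: the paper packages the combinatorial step as a separate structural lemma (Lemma~\ref{lem:unimodal}) that partitions the index set into unimodal intervals and telescopes within each, whereas you compute the cost in closed form, rewrite it via the total-variation identity $\sum_R \tilde q_i - \sum \min = \tilde q_{\max R}+\sum(\tilde q_i-\tilde q_{i+1})^+$, and then bound by peaks; the boundary bookkeeping at indices $1,2$ is likewise handled by explicit case analysis in the paper and by your $\mathbf{1}[2\in S]$ term. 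One small point: your assertion that ``the peaks are automatically independent'' can fail on plateaus (e.g.\ $\tilde q_i=\tilde q_{i+1}$ both maximal), so you should either thin to strict local maxima or note that the required independent set exists by a short induction --- this is the same care the paper takes in constructing the $M_k$.
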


\begin{proof}
  We define a flow $f$ and analyze its cost.
  For each $i$, we call the nodes $(v,x_i)$ (for all voters $v$)
  \emph{layer $i$}.
  Let $V_{i}$ be the set of voters $v$ with \Pref[v]{x_{i-1}}{x_i},
  with $V_1 := \ALLVOTERS$ for notational simplicity.

  We construct the flow layer by layer;
  our construction will ensure that each node $(v,x_i)$ with 
  $v \in V_i$ has exactly $\frac{m}{\SetCard{V_i}}$ units of flow entering.
  This holds in the base case $i=1$,
  because each node in layer 1 is the source node of one unit of flow.
  
  For the \Kth{i} step of the construction,
  we first route all the flow within layer $i$ using sideways edges,
  from nodes $(v,x_i)$ with $v \in V_i$
  to nodes $(v',x_i)$ with $v' \in V_{i+1}$.
  We then route it to nodes $(v', x_{i+1})$ in layer $i+1$ using
  preference edges.
  More specifically, to route the flow within layer $i$,
  we first consider voters $v \in V_i \cap V_{i+1}$.
  For those voters, 
  $\min(\frac{m}{\SetCard{V_i}}, \frac{m}{\SetCard{V_{i+1}}})$ units
  of flow simply stay at $(v,x_i)$.
  The node $(v,x_i)$ for such $v$ will have additional incoming flow
  from other nodes
  (if $\frac{m}{\SetCard{V_{i+1}}} > \frac{m}{\SetCard{V_{i}}}$)
  or additional outgoing flow to other nodes
  (if $\frac{m}{\SetCard{V_{i+1}}} < \frac{m}{\SetCard{V_{i}}}$).
  The remaining flow is routed arbitrarily using sideways edges
  from nodes $(v,x_i)$ with $v \in V_i$
  to nodes $(v',x_i)$ with $v' \in V_{i+1}$,
  of course ensuring that each such node $(v',x_i)$ has in total
  $\frac{m}{\SetCard{V_{i+1}}}$ units of flow entering.
  
  After this redistribution within layer $i$,
  each $(v,x_i)$ routes its flow to $(v,x_{i+1})$.
  Notice that this is always possible,
  because \Pref[v]{x_{i}}{x_{i+1}} for all $v \in V_{i+1}$.
  The construction is illustrated with an example in
  Figure~\ref{fig:flow-construction-uncovered}.
  
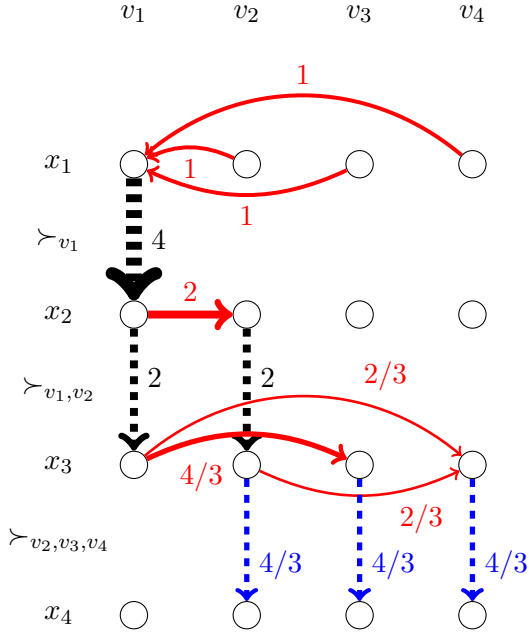
\begin{figure}[htb]
\begin{center}
  \begin{tikzpicture}[auto,active/.style={circle,draw=black}]
  \pgfsetxvec{\pgfpoint{0cm}{-1cm}}
  \pgfsetyvec{\pgfpoint{-1cm}{0cm}}

  \draw (0,4.5) node {$v_1$};
  \draw (0,3) node {$v_2$};
  \draw (0,1.5) node {$v_3$};
  \draw (0,0) node {$v_4$};

  \draw (2,5.5) node {$x_1$};
  \draw (3,5.5) node {$\PREF[v_1]$};
  \draw (4,5.5) node {$x_2$};
  \draw (5,5.5) node {$\PREF[v_1,v_2]$};
  \draw (6,5.5) node {$x_3$};
  \draw (7,5.5) node {$\PREF[v_2,v_3,v_4]$};
  \draw (8,5.5) node {$x_4$};

  \node[active] (v11) at (2,4.5) {};
  \node[active] (v12) at (2,3)   {};
  \node[active] (v13) at (2,1.5) {};
  \node[active] (v14) at (2,0) {};

  \node[active] (v21) at (4,4.5) {};
  \node[active] (v22) at (4,3) {};
  \node[active] (v23) at (4,1.5) {};
  \node[active] (v24) at (4,0) {};

  \node[active] (v31) at (6,4.5) {};
  \node[active] (v32) at (6,3) {};
  \node[active] (v33) at (6,1.5) {};
  \node[active] (v34) at (6,0) {};

  \node[active] (v41) at (8,4.5) {};
  \node[active] (v42) at (8,3) {};
  \node[active] (v43) at (8,1.5) {};
  \node[active] (v44) at (8,0) {};

  \draw [line width = 1.5pt,->,red] (v12) to [bend right=25] node {1} (v11);
  \draw [line width = 1.5pt,->,red] (v13) to [bend left=25] node {1} (v11);
  \draw [line width = 1.5pt,->,red] (v14) to [bend right=40] node[swap] {1} (v11);

  \draw [line width = 6pt,dashed,->] (v11) to node {4} (v21);

  \draw [line width = 3pt,->,red] (v21) to node {2} (v22);

  \draw [line width = 3pt,dashed,->] (v21) to node[pos=0.4] {2} (v31);
  \draw [line width = 3pt,dashed,->] (v22) to node[pos=0.4] {2} (v32);

  \draw [line width = 2pt,->,red] (v31) to [bend left = 25] node[swap,pos=0.1] {4/3} (v33);
  \draw [line width = 1pt,->,red] (v31) to [bend left = 40] node[pos=0.65] {2/3} (v34);
  \draw [line width = 1pt,->,red] (v32) to [bend right = 25] node[swap,pos=0.65] {2/3} (v34);
  
  \draw [line width = 2pt,dashed,->,blue] (v32) to node[pos=0.7] {4/3} (v42);
  \draw [line width = 2pt,dashed,->,blue] (v33) to node[pos=0.7] {4/3} (v43);
  \draw [line width = 2pt,dashed,->,blue] (v34) to node[pos=0.7] {4/3} (v44);
\end{tikzpicture}
\end{center}
\caption{An illustration of the flow construction.
  In the example, there are 4 voters and 4 relevant candidates,
  with voter preferences shown on the left.
  The preference fractions are
  $\PrefFrac{1} = 1/4, \PrefFrac{2} = 1/2, \PrefFrac{3} = 3/4$.
Sideways flows are shown in solid red,
while flow along preference edges is shown in dashed lines.
The dashed lines into nodes for candidate $x_4$ are shown in blue
(instead of black), to emphasize that they contribute to the objective
function.
The amount of flow is given numerically, and also shown using the
width of the lines/arcs.
Edges that are not used by the flow are not shown.
\label{fig:flow-construction-uncovered}}
\end{figure}

  We now analyze the cost associated with any fixed voter $v$.
  The cost has two components: the incoming flow at $(v,x_\ell)$
  (shown in blue in Figure~\ref{fig:flow-construction-uncovered}),
  and the cost associated with incoming/outgoing flow
  using sideways edges incident on $(v,x_i)$ for $i < \ell$
  (shown in red in Figure~\ref{fig:flow-construction-uncovered}).
  We begin with the incoming flow at $(v,x_\ell)$:
  if $v \in V_{\ell-1}$, the incoming flow is
  $\frac{m}{\SetCard{V_{\ell}}} = \frac{1}{\PrefFrac{\ell}}$;
  otherwise, it is 0.
  
  Next, we consider the cost associated with sideways edges.
  As a general guideline (subtleties will be discussed momentarily),
  when $v \in V_i$, the node $(v,x_i)$ has
  $\frac{m}{\SetCard{V_i}} = \frac{1}{\PrefFrac{i}}$ units of flow coming
  in along sideways edges,
  and the node $(v,x_{i+1})$ has the same amount of flow leaving
  along sideways edges.
  The associated cost of both together is $\frac{2}{\PrefFrac{i}}$.
  Two obvious exceptions are layers $i=1$ and $i=\ell-1$.
  For $i=1$, one unit of flow simply originates with $(v,x_1)$,
  resulting in no cost.
  For $i=\ell-1$, no sideways edge is used to route outgoing flow;
  however, this is compensated by the incoming flow at
  $(v,x_\ell)$ (discussed in the preceding paragraph),
  which adds the same cost term.

  However, simply adding up the bounds from the preceding paragraph
  over all steps $i$ with $v \in V_i$ is too pessimistic, because our
  flow construction avoids routing more flow than necessary when
  $v \in V_i \cap V_{i+1}$.
  A tighter bound is captured by the following lemma, proved below:

\begin{lemma} \label{lem:unimodal}
  Let $I$ be the set of all indices $i$ with $v \in V_i$.
  $I$ can be partitioned into disjoint intervals of integers
  $\SET{L_1, L_1 + 1, \ldots, R_1}, \SET{L_2, L_2 + 1, \ldots, R_2},
  \ldots, \SET{L_K, L_K+1, \ldots, R_K}$ (for some $K \geq 1$) such
  that:
  \begin{enumerate}
  \item For each $k$, there exists an index
    $M_k \in \SET{L_k, \ldots, R_k}$
  such that 
  $\PrefFrac{L_k} \geq \PrefFrac{L_k+1} \geq \cdots \geq \PrefFrac{M_k} > 0$
  and
  $\PrefFrac{M_k} \leq \PrefFrac{M_k+1} \leq \cdots \leq \PrefFrac{R_k}$;
  that is, the \PrefFrac{i} are monotone non-increasing from $L_k$ to $M_k$,
  and monotone non-decreasing from $M_k$ to $R_k$.
  \item The total cost of flow (both sideways flow and flow into
  $(v,x_\ell)$ in case $R_k = \ell$)
  associated with nodes $(v,x_i)$ with $L_k \leq i \leq R_k$
  is at most $\lambda_{M_k}$.
  \end{enumerate}
\end{lemma}

  To apply Lemma~\ref{lem:unimodal}, the key observation is that the
  set $\SET{M_1, M_2, \ldots, M_K}$ is independent, i.e., contains no
  two consecutive integers.
  If it did --- say, $i=M_k$ and $i+1=M_{k'}$ --- then both $i, i+1 \in I$.
  If $\PrefFrac{i+1} \leq \PrefFrac{i}$, this would contradict the
  maximality of $i$ in the definition of $M_k$;
  on the other hand, if $\PrefFrac{i+1} > \PrefFrac{i}$,
  then $i+1 \leq R_k$ by the definition of $R_k$, so it is impossible
  that $i+1 = M_{k'}$.

  Now, summing up the costs for each of the disjoint intervals,
  we obtain that the total cost of the flow at nodes associated with
  $v$ (both sideways flow and flow into $(v,x_{\ell})$) is at most
  $\sum_{k=1}^K \lambda_{M_k}$;
  because the set of $M_k$ is independent, this sum is at most $\Lambda$.
  Using Lemma~\ref{lem:dual-flow}, this completes the proof.
\end{proof}

\begin{extraproof}{Lemma~\ref{lem:unimodal}}
  We inductively define $L_k, R_k, M_k$ satisfying the first property,
  then show that they also guarantee the second property.
  For the base case, we set (for convenience) $R_0 := 0$.
  For the inductive step, focus on any $k \geq 1$.
  Define $L_{k} = \min \Set{i}{i \in I, i > R_{k-1}}$.
  (The construction terminates when there is no such $i$.)
  Let $M_k = \max \Set{i}{\SET{L_k, \ldots, i} \subseteq I,
    \PrefFrac{L_k} \geq \PrefFrac{L_k+1} \geq \cdots \geq \PrefFrac{i}}$.
  In words, $M_k$ is the largest index $i$ such that all indices
  between $L_k$ and $i$ are in $I$, and the \PrefFrac{j} values are
  monotone non-increasing all the way to $i$.
  Notice that because $M_k \in I$, we also have $\PrefFrac{M_k} > 0$.
  Now, let
  $R_k = \max \Set{i}{\SET{L_k, \ldots, i} \subseteq I,
    \PrefFrac{M_k} \leq \PrefFrac{M_k+1} \leq \cdots \leq \PrefFrac{i}}$.
  In words, $R_k$ is the largest index $i$ such that all indices
  between $M_k$ and $i$ (and thus also between $L_k$ and $i$)
  are in $I$, and the \PrefFrac{j} values are monotone non-decreasing
  from $M_k$ to $i$.
  This definition explicitly ensures that each interval
  $\SET{L_k, \ldots, R_k}$ is entirely contained in $I$,
  and satisfies the given monotonicity conditions.
  We now verify the second property.

  We first consider the case $M_k \notin \SET{1, 2, \ell-1}$,
  where $\lambda_{M_k} = 2/\PrefFrac{M_k}$.
  The important observation for the proof, also visible in
  Figure~\ref{fig:flow-construction-uncovered}, is that when
  $v \in V_i \cap V_{i+1}$, this eliminates sideways flow to and from
  nodes associated with $v$.
  Specifically, none of the nodes $(v,x_i)$ for $L_k \leq i < M_k$
  have outgoing flow along sideways edges
  (since all their flow stays for the next step).
  The incoming flow at $(v,x_i)$ along sideways edges is exactly
  $1/\PrefFrac{i} - 1/\PrefFrac{i-1}$
  (with $1/\PrefFrac{L_k-1} := 0$ for convenience);
  the remaining flow at $(v,x_i)$ is what is kept from step $i-1$.
  Similarly, none of the nodes $(v,x_i)$ for $M_k < i \leq R_k$ have
  incoming flow along sideways edges,
  since the node $(v,x_{i-1})$ has enough flow to meet all
  of the needs of $(v,x_i)$;
  the outgoing flow at such nodes $(v,x_i)$ along sideways edges is
  exactly $1/\PrefFrac{i} - 1/\PrefFrac{i+1}$
  (with $1/\PrefFrac{R_k+1} := 0$).
  Summing up all the incoming flows for $i = L_k, \ldots, M_k$
  (a telescoping series),
  and the outgoing flows for $i = M_k, \ldots, R_k$
  (another telescoping series),
  the total flow on sideways edges for all $(v,x_i)$ with
  $i \in \SET{L_k, \ldots, R_k}$ is at most $2/\PrefFrac{M_k} = \lambda_{M_k}$.
  Because $M_k < \ell$, there is no other cost associated with these nodes.
  Next, we consider the remaining cases $M_k \in \SET{1, 2, \ell-1}$.
  \begin{enumerate}
  \item If $M_k=1$, then by construction, $2 \notin I$;
    otherwise, the fact that $\PrefFrac{1} = 1 \geq \PrefFrac{2}$ would rule out
    setting $M_k=1$.
    Thus, the entire interval is just $\SET{L_1, \ldots, R_1} = \SET{1}$.
    Because there is no incoming sideways flow into $(v,x_1)$,
    the only cost is for one unit of outgoing sideways flow, i.e.,
    the cost is $1 = \lambda_1$.
  \item If $M_k=2$, then we must have $k=1$ and $L_k = 1$,
    because $1 \in I$ always by definition.
    We can directly apply the general analysis,
    except that we can subtract one unit of cost,
    the reason being (as in the case $M_k=1$)
    that there is no one unit of sideways flow into $(v,x_1)$.
    Thus, the total cost associated with the interval
    $\SET{L_1, \ldots, R_1}$ is at most $2/\PrefFrac{2} - 1 = \lambda_2$.
  \item If $M_k = \ell-1$, then $k=K$. There is no sideways flow out
    of $(v,x_{\ell-1})$ (and there are no nodes $(v,x_i)$ for $i >
    \ell-1$ to consider). Thus, the total cost of the sideways flows
    associated with $\SET{L_K, \ldots, R_K}$ is at most
    $1/\PrefFrac{M_k}$.
    On the other hand, in this case, there is also a cost of
    $1/\PrefFrac{M_k}$ for flow into $(v,x_{\ell})$
    (the blue flow in Figure~\ref{fig:flow-construction-uncovered});
    however, the total cost is still bounded by
    $2/\PrefFrac{M_k} = \lambda_{M_k}$.
  \end{enumerate}
  This shows that the bound holds for all cases of the interval.
\end{extraproof}

\subsection{Special Cases}
Lemma~3.7 of \cite{munagala:wang:improved} is the special case
of Corollary~\ref{cor:paths} with $\ell=3, x_1 = \WINNER, x_3 = \OPT$,
and $\PrefFrac{1} = \frac{3-\sqrt{5}}{2}, \PrefFrac{2} = \frac{\sqrt{5}-1}{2}$.
Our Corollary~\ref{cor:paths} then exactly recovers the bound of
$2+\sqrt{5}$.

When we have a uniform lower bound on the $\PrefFrac{i}$,
Corollary~\ref{cor:paths} can be simplified significantly.
(A direct proof of Corollary~\ref{cor:paths-uniform} would
also be simpler than the proof of the more general
Corollary~\ref{cor:paths}.)

\begin{corollary} \label{cor:paths-uniform}
  Let $x_1, x_2, \ldots, x_\ell$ be (distinct) candidates such that
  for each $i = 2, \ldots, \ell$, at least a $\PREFFRAC > 0$ fraction of
  the voters prefer candidate $x_{i-1}$ over candidate $x_{i}$.
  Then, if $\ell$ is even, 
  $\Cost{x_1} \leq (\frac{\ell}{\PREFFRAC} - 1) \cdot \Cost{x_\ell}$;
  if $\ell$ is odd,
  $\Cost{x_1} \leq (\frac{\ell-1}{\PREFFRAC} + 1) \cdot \Cost{x_\ell}$.
\end{corollary}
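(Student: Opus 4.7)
The plan is to obtain Corollary~\ref{cor:paths-uniform} as a direct specialization of the more general Corollary~\ref{cor:paths}. Setting $\PrefFrac{i} = \PREFFRAC$ for every $i$ gives the weight sequence $\lambda_1 = 1$, $\lambda_2 = \frac{2}{\PREFFRAC} - 1$, and $\lambda_i = \frac{2}{\PREFFRAC}$ for $3 \leq i \leq \ell$. Since $\PREFFRAC \leq 1$ (it is a fraction of voters), we have $\frac{2}{\PREFFRAC} \geq 2$, so these three values satisfy $\lambda_1 \leq \lambda_2 \leq \lambda_i$ for $i \geq 3$. The only task remaining is to compute $\Lambda$, the maximum weight of an independent set in the path $1, 2, \ldots, \ell$ under this weighting.

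I would evaluate $\Lambda$ by a direct case analysis on which of the two low-weight indices (namely $1$ and $2$) are included in the optimal set $S$. Writing $\alpha := \frac{2}{\PREFFRAC}$, the sum $\sum_{i \in S} \lambda_i$ equals $|S|\alpha$ minus $(\alpha-1)$ if $1 \in S$ and minus $1$ if $2 \in S$. Combined with the basic bound $|S| \leq \lceil \ell/2 \rceil$ (with $|S| \leq \lceil (\ell-2)/2 \rceil$ when $1, 2 \notin S$), the cases resolve as follows: for $\ell$ even, the best choice is $S = \SET{2, 4, \ldots, \ell}$, which gives $(\ell/2)\alpha - 1 = \frac{\ell}{\PREFFRAC} - 1$; for $\ell$ odd, the best choice is $S = \SET{1, 3, \ldots, \ell}$, which gives $1 + \frac{\ell-1}{2}\alpha = \frac{\ell-1}{\PREFFRAC} + 1$. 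In each case, checking that all other configurations give a strictly (or weakly) smaller sum reduces to verifying inequalities like $\alpha \geq 2$, which are immediate.

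Substituting these values of $\Lambda$ into Corollary~\ref{cor:paths} yields exactly the two bounds claimed. As a sanity check, I would verify the formula on small cases: for $\ell = 2$, both formulas agree (giving $\frac{2}{\PREFFRAC} - 1$, matching the uncovered-set argument of \cite{anshelevich:bhardwaj:postl} when $\PREFFRAC = 1/2$), and for $\ell = 3$ with $\PREFFRAC = 1/2$ the odd formula recovers the bound of $5$ that appears in the same reference.

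There is no real obstacle here; the corollary is essentially a direct specialization, and the only substantive content is the maximum-weight independent set computation on a path, which is elementary. If one wanted a more self-contained proof not invoking Corollary~\ref{cor:paths}, the flow construction of that corollary's proof could be repeated with uniform fractions, where the two balanced layer patterns (sources at odd or even layers) correspond exactly to the two candidate independent sets identified above; but going through Corollary~\ref{cor:paths} keeps things short.
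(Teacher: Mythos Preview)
Your proposal is correct and follows essentially the same approach as the paper: substitute $\PrefFrac{i} = \PREFFRAC$ into Corollary~\ref{cor:paths}, then identify the maximizing independent set as $\SET{2,4,\ldots,\ell}$ for even $\ell$ and $\SET{1,3,\ldots,\ell}$ for odd $\ell$. You simply give more detail in justifying why these sets are optimal (via the $\alpha \geq 2$ observation), whereas the paper states this without elaboration.
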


\begin{proof}
  We substitute $\PrefFrac{i} = \PREFFRAC$ for all $i$ in
  Corollary~\ref{cor:paths};
  then, we observe that for even $\ell$, the independent set of
  integers giving the largest sum is $\SET{2, 4, \ldots, \ell}$,
  while for odd $\ell$, it is $\SET{1, 3, \ldots, \ell}$.
\end{proof}

The asymmetry between even and odd $\ell$ disappears when
$\PREFFRAC = \half$ (i.e., in the case of the majority graph),
where the bound simply becomes $2\ell-1$.
The result thus strongly generalizes Theorem~7 in
\cite{anshelevich:bhardwaj:postl},
which is the special case of $\PREFFRAC = \half$ and $\ell=3$.

\section{Distortion of Ranked Pairs and Schulze} \label{sec:rp-schulze}
Corollary~\ref{cor:paths-uniform} allows us to pin down the
distortion of the Ranked Pairs and Schulze rules to 
within small constant factors.

\begin{corollary} \label{cor:rp-schulze}
Both the Ranked Pairs mechanism and the Schulze rule asymptotically
have distortion
at most $2 \sqrt{2} \cdot \sqrt{n} + o(\sqrt{n})$
and at least $\frac{\sqrt{2}}{2} \sqrt{n}$.
\end{corollary}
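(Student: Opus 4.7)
The plan for the upper bound is to derive two bounds on $\Cost{w}/\Cost{x^*}$ from Corollary~\ref{cor:paths-uniform} and balance them using AM--GM. Applying Lemma~\ref{lem:rp-schulze-basic} to the pair of the rule's winner $w$ and the optimum $x^*$ supplies a parameter $p \in [0,1]$ and a path of distinct candidates $w = x_1, x_2, \ldots, x_\ell = x^*$ (so $\ell \leq n$) such that each consecutive pair satisfies $\Pref[v]{x_i}{x_{i+1}}$ for at least a $p$-fraction of voters, while at least $q := 1-p$ of the voters prefer $w$ to $x^*$. Applying Corollary~\ref{cor:paths-uniform} to the two-element path $(w, x^*)$ with $\PREFFRAC = q$ gives \emph{Bound~A}: $\Cost{w}/\Cost{x^*} \leq 2/q - 1$.

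For \emph{Bound~B} I would sub-sample the given path. A union bound shows that if each of $k$ consecutive preference sets $\{v : \Pref[v]{x_i}{x_{i+1}}\}$ has mass at least $p$, then their common intersection has mass at least $1 - kq$, and every voter in it prefers $x_i$ to $x_{i+k}$. Retaining every $k$-th vertex of the original path (and appending $x_\ell$ if it was skipped) therefore produces a path of length $\lceil(\ell-1)/k\rceil + O(1)$ satisfying the hypothesis of Corollary~\ref{cor:paths-uniform} with $\PREFFRAC = 1 - kq$. Choosing $k := \min(\lfloor 1/(2q) \rfloor,\, \ell - 1)$ (and falling back to Bound~A when $q > 1/2$) forces $1 - kq \geq 1/2$ and yields $\Cost{w}/\Cost{x^*} \leq 4q\ell + O(1)$, collapsing to a constant in the extreme subcase $q \leq 1/(2\ell)$ where the sub-sampled path reduces to $(w, x^*)$ with preference fraction at least $1/2$.

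Next I apply AM--GM:
\[
  \min\!\left(\tfrac{2}{q} - 1,\; 4q\ell + O(1)\right) \leq \sqrt{\tfrac{2}{q}\bigl(4q\ell + O(1)\bigr)} = \sqrt{8\ell + O(1/q)} \leq 2\sqrt{2}\sqrt{n} + o(\sqrt{n}),
\]
using $\ell \leq n$ and observing that when the two terms are comparable (near $q \approx 1/\sqrt{2\ell}$), the $O(1/q)$ slack contributes only $o(\sqrt{n})$; in the other regimes, one of the two bounds alone already suffices. This proves the upper bound.

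For the matching lower bound, I would generalize the distortion-$5$ instance of \cite{goel:krishnaswamy:munagala} by inserting $\Theta(\sqrt{n})$ intermediate candidates forming a long forward chain $w = y_1, y_2, \ldots, y_K = x^*$, each pairwise preference fraction $p_{y_i, y_{i+1}}$ just above $1/2$ so that its edge is inserted before $(x^*, w)$ in the Ranked Pairs ordering. Once the whole chain is in the DAG, the reverse edge $(x^*, w)$ creates a cycle and is rejected, leaving $w$ as the unique source (and, by the widest-path characterization, also the Schulze winner). A direct cost computation in a suitable metric embedding, placing voters so that $w$ is far while $x^*$ is close, gives $\Cost{w}/\Cost{x^*} \geq (\sqrt{2}/2)\sqrt{n}$. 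The main obstacles are the book-keeping for the rounding in the sub-sampling analysis (so that the leading constant $2\sqrt{2}$ emerges cleanly and the error is genuinely $o(\sqrt{n})$), and designing a single explicit instance that forces both Ranked Pairs and Schulze to pick the same poor winner with the tight leading constant $\sqrt{2}/2$.
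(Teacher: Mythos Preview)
Your upper-bound argument is essentially the paper's: both extract a path of length $\ell \le n$ with fraction $p$ from Lemma~\ref{lem:rp-schulze-basic}, then balance two bounds --- the direct two-candidate bound $2/(1-p)-1$, and a sub-sampled chain bound obtained by skipping $\Theta(1/(1-p))$ candidates at a time so that (by a union bound) the preference fraction along the coarse chain stays $\ge 1/2$, after which Corollary~\ref{cor:paths-uniform} gives $2L'-1$. The paper picks the explicit threshold $p = 1 - 1/\sqrt{2n}$ and argues each side separately; your AM--GM packaging is a cosmetic variant of the same dichotomy (and indeed you fall back to a case split to dispose of the $O(1/q)$ tail). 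One small slip: your Bound~B estimate ``$4q\ell + O(1)$'' is only accurate for small $q$ --- when $q$ is near $1/2$ you only get $k=1$ and hence $\approx 2\ell$ --- but that regime is already covered by Bound~A, so this does not damage the conclusion.

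The lower-bound sketch, however, misidentifies the construction. You propose a chain of $\Theta(\sqrt{n})$ candidates with pairwise preference fractions ``just above $1/2$''. The paper's instance is quite different: it places \emph{all} $n$ candidates on the chain, with consecutive fractions $p_{x_i,x_{i+1}} = 1 - \frac{1}{m+2}$ close to~$1$, not~$1/2$. The key device is a \emph{block structure}: each of $m \approx \sqrt{2n}$ voters partitions the default order $x_1,\ldots,x_n$ into blocks of size $m$ (with a voter-specific phase shift), reverses the order \emph{of the blocks}, but keeps the default order \emph{within} each block; two extra voters follow the default order throughout. Thus, for every consecutive pair $(x_i,x_{i+1})$ only one voter out of $m+2$ dissents, so Ranked Pairs and Schulze lock in the entire chain and output $x_1$; yet globally each of the $m$ voters strongly prefers $x_n$ to $x_1$. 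A metric assigning $d(v,x_i) \approx 2(B-b)+1$ when $x_i$ lies in $v$'s block $b$ then realizes the $\frac{\sqrt{2}}{2}\sqrt{n}$ ratio. Your sketch lacks this block idea; without it, it is unclear how to simultaneously make both rules commit to a long chain \emph{and} exhibit a consistent metric in which the chain's start is $\Theta(\sqrt{n})$ times costlier than its end.
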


\begin{proof}
  We begin by proving the upper bounds.
  Let \WINNER be the candidate selected by the rule,
  and \OPT the optimum candidate.
  By Lemma~\ref{lem:rp-schulze-basic}, applied with $y = \OPT$,
  there exists a $p$ and a sequence of distinct candidates
  $x_1 = \WINNER, x_2, \ldots, x_{\ell} = \OPT$ with the property that
  for each $i$,
  at least a $p$ fraction of voters prefer $x_i$ over $x_{i+1}$,
  and at most a $p$ fraction of voters prefer \OPT over \WINNER.
  The existence of $x_1, \ldots, x_{\ell}$ with these properties is
  all that we need from the specific voting rules.
  The rest of the proof will be completely generic,
  and would thus also apply to any other voting rule satisfying
  Lemma~\ref{lem:rp-schulze-basic}.
  
  We consider two cases, depending on the value of $p$.
  The case $p \leq 1-\frac{1}{\sqrt{2n}}$ is easy.
  In this case, at least a $1-p \geq \frac{1}{\sqrt{2n}}$ fraction of
  voters prefer \WINNER over \OPT.
  Lemma~6 from \cite{anshelevich:bhardwaj:elkind:postl:skowron}
  states that if at least a $q$ fraction of voters prefer
  $x$ over $x'$, then
  $\Cost{x} \leq (1 + \frac{2(1-q)}{q}) \cdot \Cost{x'}$.
  Applying this lemma to \WINNER and \OPT,
  the distortion of \WINNER is at most
  $\frac{2}{1-p} - 1 \leq 2 \sqrt{2} \cdot \sqrt{n}$.

  When $p > 1-\frac{1}{\sqrt{2n}}$,
  we use Corollary \ref{cor:paths-uniform}.
  Let $\lambda = \Floor{\sqrt{n/2}}$, and
  $B = \Ceiling{\ell/\lambda}$.
  Because $\ell \leq n$,
  we get that $B \leq \sqrt{2n} + o(\sqrt{n})$.
  Consider the $B+1$ candidates
  $y_j := x_{j \lambda+1}$ for $j = 0, 1, \ldots, B-1$,
  and $y_B := x_\ell$.
  Because for each $i$,
  at least a $1 - \frac{1}{\sqrt{2n}}$ fraction of voters
  prefer $x_{i}$ to $x_{i+1}$,
  a union bound over the candidates
  $x_{j \lambda+1}, x_{j\lambda+2}, \ldots, x_{(j+1) \lambda}$
  shows that for each $j < B$, at least a
  $1 - \lambda \cdot \frac{1}{\sqrt{2n}} \geq \half$
  fraction of voters prefer $y_j$ over $y_{j+1}$.
  By using Corollary~\ref{cor:paths-uniform}
  (applied with $\PREFFRAC = \half$ and $\ell = B+1$),
  we obtain that
  \[
    \Cost{\WINNER} \; = \; \Cost{y_0}
    \; \leq \; (2B + 1) \cdot \Cost{y_B}
    \; \leq \; (2 \sqrt{2} \cdot \sqrt{n} + o(\sqrt{n})) \cdot \Cost{\OPT},
  \]
  so the distortion is upper-bounded by
  $2 \sqrt{2} \cdot \sqrt{n} + o(\sqrt{n})$.

  \medskip

  We now turn to a lower bound. Our lower-bound construction is a
  straightforward generalization of the construction that
  \cite{goel:krishnaswamy:munagala} used to show a lower bound of 5 on
  the distortion of the two rules.
  Let $m$ be given (assumed even), and set (with foresight\footnote{%
  The choice $B=m/2$ gives the tightest lower bound for this type of
  construction. Other choices work as well; for instance, $B=m$ gives
  a lower bound of $\frac{2}{3} \sqrt{n}$ instead of
  $\frac{\sqrt{2}}{2} \sqrt{n}$.}) $B=m/2$.
  Our construction has $m+2$ voters and $n=mB$ candidates
  $x_1, x_2, \ldots, x_{n}$.
  Voters $v=m+1, m+2$ have \Pref[v]{x_i}{x_{i+1}} for all $i$.
  We call this the \emph{default order}.
  To define the order (and later: distances) for voters
  $v=1, \ldots, m$, we define the following \emph{blocks} of
  consecutive (according to the default order) candidates.
  Block \Block{v}{b} for $1 \leq b < B$ consists of the $m$ candidates
  $\SET{x_{(b-1)m+v}, x_{(b-1)m+v+1}, \ldots, x_{bm+v-1}}$.
  Block \Block{v}{0} consists of the $v-1$ candidates
  $\SET{x_1, \ldots, x_{v-1}}$.
  (Notice that $\Block{1}{0} = \emptyset$.)
  Finally, block \Block{v}{B} consists of the $m+1-v$ candidates
  $\SET{x_{(B-1)m+v}, x_{(B-1)m+v+1}, \ldots, x_n}$.
  Voter $v=1, \ldots, m$ ranks the blocks in reverse order
  $\Block{v}{m}, \Block{v}{m-1}, \Block{v}{m-2}, \ldots, \Block{v}{0}$;
  within each block, $v$ ranks the candidates by the default order.
  An example of the block structure and ordering is shown in
  Figure~\ref{fig:RP-ranking}.
    
  \begin{figure}[htb]
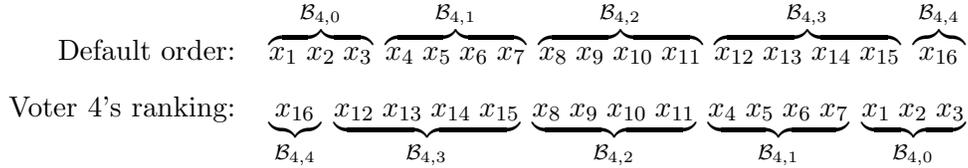

    \begin{align*}
      \text{Default order:} \quad &
      \overbrace{x_1 \; x_2 \; x_3}^{\Block{4}{0}} \;
      \overbrace{x_4 \; x_5 \; x_6 \; x_7}^{\Block{4}{1}} \;
      \overbrace{x_8 \; x_9 \; x_{10} \; x_{11}}^{\Block{4}{2}} \;
      \overbrace{x_{12} \; x_{13} \; x_{14} \; x_{15}}^{\Block{4}{3}} \;
      \overbrace{x_{16}}^{\Block{4}{4}}\\[1ex]
      \text{Voter 4's ranking:} \quad &
      \underbrace{x_{16}}_{\Block{4}{4}} \;
      \underbrace{x_{12} \; x_{13} \; x_{14} \; x_{15}}_{\Block{4}{3}} \;
      \underbrace{x_8 \; x_9 \; x_{10} \; x_{11}}_{\Block{4}{2}} \;
      \underbrace{x_4 \; x_5 \; x_6 \; x_7}_{\Block{4}{1}} \;
      \underbrace{x_1 \; x_2 \; x_3}_{\Block{4}{0}} \;
    \end{align*}
    
    \caption{An illustration of the blocks \Block{4}{b} and the
      preference ranking of voter 4. In this example, $m=B=4$.
      \label{fig:RP-ranking}}
  \end{figure}

  Intuitively, this means that on a ``global'' scale,
  voters $v=1, \ldots, m$ completely disagree with the default order,
  but locally, it looks like they agree.
  More formally, each voter $v$ disagrees with the default order only
  for pairs $(x_i, x_{i+1})$ when $i \equiv v-1 \mod m$.
  In particular, this means that for every pair $(x_i, x_{i+1})$, only
  one voter disagrees with the default order.
  For every pair $(x_i, x_j)$ with $i < j$, at least the voters
  $v = m+1, m+2$ have \Pref[v]{x_i}{x_j}, so the edges
  $(x_i, x_{i+1})$ have highest weight.
  In particular, this means that they are inserted first by Ranked
  Pairs, and hence Ranked Pairs chooses $\WINNER = x_1$.
  Similarly, $x_1$ has a path of width $1-\frac{1}{m+2}$ to each
  $x_i$, exceeding the width for any other node.
  Thus, the Schulze method also chooses $\WINNER = x_1$.

  We will now define a metric \DIST which is
  consistent with these rankings.
  Voters $v=m+1,m+2$ have distance $B$ to each candidate.\footnote{%
    To avoid tie breaking issues, one can easily perturb these and
    other equal distances by small $\epsilon$ values so that all
    distances are unique and the desired order is uniquely induced.
    We avoid doing so to not overload the proof with inessential notation.}
  Each voter $v = 1, \ldots, m$ has distance
  $\Dist{v}{x_i} = 2(B-b) + 1$ from all candidates $x_i \in B_{v,b}$.
  First, these distances explicitly ensure consistency with the
  voters' rankings.
  It remains to verify the triangle inequality.
  Consider two voters $v \neq v'$ and candidates $x_i \neq x_j$.
  We need to show that $\Dist{v}{x_i} \leq \Dist{v}{x_j} +
  \Dist{v'}{x_j} + \Dist{v'}{x_i}$.
  This is trivial if $v \in \SET{m+1,m+2}$,
  because $v$ has distance $B$ to all candidates.
  If $v' \in \SET{m+1,m+2}$, then the right-hand side contains two
  terms of $B$, and one term \Dist{v}{x_j} that is at least 1.
  Hence, the inequality holds.
  Finally, if both $v, v' \leq m$, then let $b,b'$ be the blocks such
  that $x_i \in \Block{v}{b}, x_i \in \Block{v'}{b'}$.
  The definition of the block structure ensures that
  $\Abs{b-b'} \leq 1$;
  in particular, $\Dist{v}{x_i} \leq \Dist{v'}{x_i} + 2$.
  Because $\Dist{v}{x_j} \geq 1, \Dist{v'}{x_j} \geq 1$,
  the triangle inequality again holds.

  Finally, we compute the social costs of $x_1$
  (the winner in Schulze and Ranked Pairs) and $x_n$.
  $x_1$ is in block 0 for all voters $v=2, \ldots, m$,
  and in block 1 for voter 1.
  Hence, his combined distance from these voters is
  $(m-1) (2B+1) + (2B-1) = 2Bm+m-2$.
  With the added distance of $B$ from each of $v=m+1,m+2$,
  the total cost of $x_1$ is $2Bm+2B+m-2 \geq 2Bm$.
  On the other hand, $x_n$ is at distance 1 from all voters
  $v=1, \ldots, m$ and at distance $B$ from $v=m+1,m+2$,
  for a total cost of $2B+m$.
  The ratio is thus at least
  $\frac{2Bm}{2B+m} \stackrel{B=m/2}{=} \frac{m}{2}
  \stackrel{n=m^2/2}{=} \frac{\sqrt{2}}{2} \cdot m$.
  This completes the proof.
\end{proof}

\begin{remark}
  The upper bound in Corollary~\ref{cor:rp-schulze}
  was a direct application of our flow-based framework.
  While the lower bound did not explicitly use the framework,
  the counter-example was in fact discovered after failed attempts to
  improve the upper bound. The failure to find ways to route flow very
  clearly suggested the types of rankings that were obstacles
  (i.e., reversed block structures).
  In turn, the distances were found essentially using the primal
  linear program.
\end{remark}

\section{A Candidate Algorithm for Distortion 3} \label{sec:combinatorial}
As a third application, we derive a purely combinatorial (i.e., not
LP-based) voting mechanism, which we conjecture to have distortion 3.
We show that this conjecture would follow from any of three
different-looking combinatorial conjectures we will formulate.

The point of departure for the derivation of the mechanism is
Corollary~\ref{cor:bipartite},
which simplifies Lemma~\ref{lem:dual-flow},
reducing it to a purely combinatorial property of a certain graph.
Corollary~\ref{cor:bipartite} was proved as Theorem~4.4
in \cite{munagala:wang:improved}, using a significantly more
complex proof.

For any two candidates $x, y$,
we consider the following bipartite graph \Bip{x}{y}
on the node set $(\ALLVOTERS, \ALLVOTERS)$;
that is, there is one node on the ``left'' for each voter $v$,
and one node on the ``right'' for each voter $v'$.
(We will use ``left'' and ``right'' to distinguish the two vertex sets.)
There is an edge $(v,v')$ if and only if there exists a candidate
$z \in \ALLCANDS$ ($z=x$ or $z=y$ are explicitly allowed)
such that \WPref[v]{x}{z} and \WPref[v']{z}{y}.

\begin{corollary} \label{cor:bipartite}
  Let $x \neq y$ be two candidates.
  If \Bip{x}{y} has a perfect matching,
  then $\Cost{x} \leq 3 \Cost{y}$.
\end{corollary}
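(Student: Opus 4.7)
The plan is to apply Lemma~\ref{lem:dual-flow} with the winner role played by $x$ and the optimum role played by $y$: it suffices to build a flow $f$ on $H$ whose per-voter cost $\FlCost{f}{u}$ is bounded by $3$ for every voter $u$, since Lemma~\ref{lem:dual-flow} would then give $\Cost{x} \leq 3 \Cost{y}$.

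Let $\MATCH : \ALLVOTERS \to \ALLVOTERS$ denote the bijection induced by the perfect matching in $\Bip{x}{y}$. The defining property of the edge $(v, \MATCH(v))$ lets me fix, for each voter $v$, a witness candidate $z_v$ satisfying $\WPref[v]{x}{z_v}$ and $\WPref[\MATCH(v)]{z_v}{y}$. I will then route a single unit of flow from $(v,x)$ to $(\MATCH(v),y)$ along the three-segment path
\[
(v,x) \;\longrightarrow\; (v,z_v) \;\longrightarrow\; (\MATCH(v),z_v) \;\longrightarrow\; (\MATCH(v),y),
\]
where the first and third segments are preference edges (justified by the weak preferences defining $z_v$, and trivially omitted when $z_v = x$ or $z_v = y$ respectively), and the middle segment is a sideways edge (omitted when $v = \MATCH(v)$). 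Since $\MATCH$ is a bijection, the aggregation of these single-unit routes is a valid flow originating one unit at each $(v,x)$ and absorbing one unit at each $(u,y)$, as required by Lemma~\ref{lem:dual-flow}.

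The key computation is to bound $\FlCost{f}{u}$ for an arbitrary voter $u$. The first term is the flow into $(u,y)$, which equals exactly $1$, since a single unit --- the one originating at $\MATCH^{-1}(u)$ --- is absorbed there. The second term sums sideways flow through nodes $(u,z)$ only for $z \neq y$, and there are exactly two potential contributions: the unit originating at $(u,x)$ may traverse the sideways edge $(u,z_u) \to (\MATCH(u),z_u)$ on the outgoing side of $u$ (contributing at most $1$), and the unit destined for $(u,y)$ may traverse the sideways edge $(\MATCH^{-1}(u),z_{\MATCH^{-1}(u)}) \to (u,z_{\MATCH^{-1}(u)})$ on the incoming side of $u$ (contributing at most $1$). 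No other flow touches nodes associated with $u$, so $\FlCost{f}{u} \leq 1 + 1 + 1 = 3$.

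The main thing to verify carefully is that the three degenerate possibilities --- $z_v = x$, $z_v = y$, or $v = \MATCH(v)$ --- do not corrupt the bound; but each such degeneracy simply omits one of the three segments of the route, which can only reduce $\FlCost{f}{u}$. So the only genuinely non-trivial step is recognizing that the definition of $\Bip{x}{y}$ is tailored precisely to supply witnesses $z_v$ that make the three-segment route feasible, and that the bijectivity of $\MATCH$ ensures the sideways cost at any voter is charged to at most two routes --- the one originating at $u$ and the one destined for $u$.
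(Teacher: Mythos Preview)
Your proof is correct and follows essentially the same approach as the paper: you construct the identical three-segment flow $(v,x)\to(v,z_v)\to(\Match{v},z_v)\to(\Match{v},y)$ using the matching witness, handle the same degeneracies by omitting trivial segments, and bound $\FlCost{f}{u}\le 3$ via the same accounting (one unit absorbed at $(u,y)$, plus at most one outgoing and one incoming sideways edge charged to $u$ by bijectivity of \MATCH).
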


\begin{proof}
  Assume that there is a perfect matching in \Bip{x}{y};
  for each voter $v$, let \Match{v} be the voter $v$ is matched with.
  By definition of \Bip{x}{y}, there is a candidate \Inter{v}
  such that \WPref[v]{x}{\Inter{v}} and \WPref[\Match{v}]{\Inter{v}}{y}.

  We now define the flow $f$ from each source node $(v, x)$.
  We route one unit of flow along the path
  $(v,x) \to (v,\Inter{v}) \to (\Match{v},\Inter{v}) \to (\Match{v},y)$.
  Notice that by definition of \Inter{v}, the first and third edge
  always exist.
  Also, if $\Inter{v} = x$, then the first two nodes are the same,
  and we omit the first edge.
  Similarly, if $\Match{v} = v$, we omit the second edge,
  and if $\Inter{v} = y$, we omit the third edge.

  This construction defines a valid flow, routing one
  unit of flow from each $(v,x)$ to some $(v',y)$ (for some $v'$).
  So it only remains to bound $\FlCost{f}{v} \leq 3$ for all $v$.

  Because \MATCH is a matching, there is exactly one unit of flow
  arriving at each node $(v,y)$.
  For a given voter $v$, let $v'$ be the unique voter with
  $\Match{v'} = v$.
  Then, the only two edges of the form $(v,z) \to (v',z)$ or
  $(v',z) \to (v,z)$ that can be used by $f$ are
  $(v,\Inter{v}) \to (\Match{v},\Inter{v})$ and
  $(v',\Inter{v'}) \to (v,\Inter{v'})$.
  Hence, the second part of the cost term \FlCost{f}{v} is at most 2,
  meaning that $\FlCost{f}{v} \leq 3$.
  The claim now follows by applying Lemma~\ref{lem:dual-flow}.
\end{proof}

Corollary~\ref{cor:bipartite} immediately suggests a natural
mechanism with distortion at most three,
which was also given as \textsc{MatchingUncovered} in
\cite{munagala:wang:improved}:

\begin{quote}
  \ABM:\\
  Find a candidate \WINNER such that for all other candidates $x$,
  the bipartite graph \Bip{\WINNER}{x} has a perfect matching.
\end{quote}

The mechanism \ABM sidesteps having to solve the $\Theta(n^2)$
LPs~\eqref{eqn:primal-lp}, 
instead solving $\Theta(n^2)$ bipartite matching problems.
The question then is whether such a candidate \WINNER actually exists.
We present three different conjectures,
each of which would imply the existence of \WINNER,
and thus, by Corollary~\ref{cor:bipartite},
that \ABM has distortion 3.

\subsection{The Candidate Comparison Graph \COMPG}
A key analysis tool in this section is a directed graph \COMPG on the
set of all candidates \ALLCANDS, which we call the \emph{\CandCompGr}.
\COMPG contains the directed edge $(y,x)$ if and only if the graph
\Bip{x}{y} does \emph{not} have a perfect matching.
In a sense, $y$ is a witness that $x$ would be a dangerous choice as
winner, since Corollary~\ref{cor:bipartite} would not apply
to bound the cost of $x$ when $y$ is the optimal candidate.%
\footnote{Of course, Corollary~\ref{cor:bipartite} is only a
  \emph{sufficient} condition, not a necessary one.
  So even when Corollary~\ref{cor:bipartite} cannot be applied, it is
  conceivable that \WINNER would achieve a distortion of 3.
  However, it is not clear which tool we could use to bound the
  distortion, which is why we focus only on the implications of
  Corollary~\ref{cor:bipartite} here.}
Conversely, any candidate \WINNER without incoming edges in \COMPG is a
safe choice as a winner, because Corollary~\ref{cor:bipartite} implies
a bound of 3 on its cost ratio to \OPT.

The following straightforward lemma captures that if we remove some
candidates, and leave each voter's ranking of the remaining candidates
unchanged, then the edges of the resulting graph \COMPGP
are a superset of the edges of the subgraph of \COMPG
induced by the remaining candidates.
We write $\ALLPREFS[\ALLCANDSP]$ for the vector of rankings
\PREFP[v], where each \PREFP[v] is the ranking \PREF[v],
restricted to candidates in \ALLCANDSP.
In other words, \PrefP[v]{x}{y} if and only if \Pref[v]{x}{y},
for all $x, y \in \ALLCANDSP$.

\begin{lemma} \label{lem:induced-graph}
  Let $(\ALLCANDS, \ALLPREFS)$ be a social choice instance,
  and $\ALLCANDSP \subseteq \ALLCANDS$.
  Let $\COMPG = \CompG{\ALLCANDS}{\ALLPREFS}$ and
  $\COMPGP = \CompG{\ALLCANDSP}{\ALLPREFS[\ALLCANDSP]}$.
  Then, $\COMPGP \supseteq \COMPG[\ALLCANDSP]$; 
  that is, \COMPGP contains all edges of the induced subgraph
  $\COMPG[\ALLCANDSP]$.
\end{lemma}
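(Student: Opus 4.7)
The plan is to unwind the definitions and observe that restricting the candidate set can only \emph{remove} edges from the relevant bipartite matching graph, which can only destroy — not create — perfect matchings.

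First, I would pick an arbitrary edge $(y,x) \in \COMPG[\ALLCANDSP]$. By definition of the induced subgraph, this means $x, y \in \ALLCANDSP$ and $(y,x) \in \COMPG$, so by the definition of \COMPG, the bipartite graph \Bip{x}{y} (built from the full instance $(\ALLCANDS, \ALLPREFS)$) has no perfect matching. To conclude $(y,x) \in \COMPGP$, I need to show that the corresponding bipartite graph \BipP{x}{y} built from $(\ALLCANDSP, \ALLPREFS[\ALLCANDSP])$ also has no perfect matching.

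The key observation is that both \Bip{x}{y} and \BipP{x}{y} are defined on the same bipartite vertex set $(\ALLVOTERS, \ALLVOTERS)$, and the edge set of \BipP{x}{y} is contained in that of \Bip{x}{y}. Indeed, an edge $(v,v')$ is present in \BipP{x}{y} iff there exists a candidate $z \in \ALLCANDSP$ with \WPrefP[v]{x}{z} and \WPrefP[v']{z}{y}. But \PREFP[v] is simply the restriction of \PREF[v] to \ALLCANDSP, so for $z \in \ALLCANDSP$ these conditions are equivalent to \WPref[v]{x}{z} and \WPref[v']{z}{y}; since also $z \in \ALLCANDSP \subseteq \ALLCANDS$, the same candidate $z$ witnesses the edge $(v,v')$ in \Bip{x}{y}. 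Hence every edge of \BipP{x}{y} is an edge of \Bip{x}{y}.

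Since \BipP{x}{y} is thus a spanning subgraph of \Bip{x}{y}, a perfect matching in \BipP{x}{y} would also be a perfect matching in \Bip{x}{y}. Contrapositively, the assumed absence of a perfect matching in \Bip{x}{y} rules out one in \BipP{x}{y}, giving $(y,x) \in \COMPGP$ as required. This argument is pure definition-chasing, so I do not anticipate any real obstacle; the only subtlety worth flagging is that the weak preferences \WPrefP[v]{x}{z} truly coincide with \WPref[v]{x}{z} whenever the candidates involved lie in \ALLCANDSP, which is exactly why restricting the candidate set does not introduce new edges.
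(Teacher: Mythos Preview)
Your proposal is correct and follows essentially the same argument as the paper: both show that the edge set of \BipP{x}{y} is contained in that of \Bip{x}{y} (since any witness $z \in \ALLCANDSP$ also works in \ALLCANDS), and then take the contrapositive to conclude that the absence of a perfect matching passes from \Bip{x}{y} to \BipP{x}{y}. The only difference is presentational---you phrase it as chasing a single edge $(y,x)$ through the definitions, while the paper states the subgraph relation on the bipartite graphs directly---but the logical content is identical.
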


\begin{proof}
Consider two candidates $x,y \in \ALLCANDSP$ and their corresponding
bipartite graph \BipP{x}{y} on voters under the restricted instance.
By definition, \BipP{x}{y} contains the edge $(v,v')$ iff there exists
a candidate $z \in \ALLCANDSP$ such that \WPref[v]{x}{z} and \WPref[v']{z}{y}.
Thus, if \BipP{x}{y} contains the edge $(v,v')$, then so does the
bipartite graph \Bip{x}{y} for the larger/original instance
$(\ALLCANDS,\ALLPREFS)$.
In other words, the edges of \BipP{x}{y} are a subset of the edges of
\Bip{x}{y}.
Therefore, whenever \BipP{x}{y} contains a perfect matching,
so does \Bip{x}{y}.
Because edges in \COMPG (and \COMPGP) correspond to pairs that do
\emph{not} have bipartite matchings, the graph \COMPGP is a supergraph
of the induced subgraph $\COMPG[\ALLCANDSP]$.
\end{proof}



\subsection{Acyclicity of \COMPG} \label{sec:reformulation}
One sufficient condition for the existence of a source node in \COMPG
(i.e., a node without incoming edges) is for \COMPG to be acyclic.
This gives rise to our first conjecture,
which was also given as Conjecture~4.8 in
\cite{munagala:wang:improved}
(though it is expressed slightly differently there):

\begin{conjecture} \label{conj:acyclic}
  For every instance $(\ALLCANDS,\ALLPREFS)$
  the graph $\COMPG = \CompG{\ALLCANDS}{\ALLPREFS}$ is non-Hamiltonian.%
  \footnote{Recall that a directed graph is \emph{Hamiltonian} if it
    contains a directed cycle of all nodes.}
\end{conjecture}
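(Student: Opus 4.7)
The plan is to argue by contradiction: assume $\COMPG = \CompG{\ALLCANDS}{\ALLPREFS}$ contains a Hamiltonian cycle $x_1 \to x_2 \to \cdots \to x_n \to x_1$, so that for each $i$ (indices mod $n$) the bipartite graph $\Bip{x_{i+1}}{x_i}$ has no perfect matching. The goal is to derive a contradiction by aggregating, around the cycle, the Hall-violator witnesses produced by these $n$ matching failures.

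A first, simple structural reduction already narrows the playing field substantially. Setting $z = x_i$ in the definition of $\Bip{x_{i+1}}{x_i}$ shows that any voter $v$ with $\WPref[v]{x_{i+1}}{x_i}$ is adjacent to every right vertex; setting $z = x_{i+1}$ shows that any such voter is, as a right vertex, adjacent to every left vertex. Writing $A_i = \Set{v}{\WPref[v]{x_{i+1}}{x_i}}$ and $B_i = \ALLVOTERS \setminus A_i$, every Hall violator $T_i$ of $\Bip{x_{i+1}}{x_i}$ must therefore satisfy $T_i \subseteq B_i$ and $A_i \subseteq \Neigh{T_i}$, so $\SetCard{A_i} < \SetCard{B_i}$ for each $i$. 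Hence any Hamiltonian cycle forces a strict pairwise majority-preference cycle $x_1 \succ x_2 \succ \cdots \succ x_n \succ x_1$; this already resolves the conjecture in any setting excluding Condorcet cycles (for instance, single-peaked preferences). A symmetric argument also shows that $\Bip{x}{y}$ and $\Bip{y}{x}$ cannot both lack perfect matchings, so $\COMPG$ is in fact an oriented graph.

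For the general case, the Hall condition has to be exploited more sharply. For each $i$ fix a minimum Hall violator $T_i$ and let $R_i = \ALLVOTERS \setminus \Neigh{T_i}$. The absence of edges between $T_i$ and $R_i$ in $\Bip{x_{i+1}}{x_i}$ translates, since preferences are total, into the rigid statement: for every $v \in T_i$, $v' \in R_i$, and every $z \in \ALLCANDS$, either $\Pref[v]{z}{x_{i+1}}$ or $\Pref[v']{x_i}{z}$. Specializing $z$ to each remaining cycle candidate $x_j$ produces a rigid template describing, for each pair $(v,v') \in T_i \times R_i$, where $x_j$ must sit relative to $x_i$ and $x_{i+1}$ in the rankings of $v$ and $v'$. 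My plan is then to use the majority-size of the $T_i$ to select, by averaging or pigeonhole, a voter $v^*$ lying in many $T_i$'s, and to chase $v^*$ together with matched partners around the cycle, showing that the templates for consecutive edges impose contradictory constraints on where $v^*$ can rank $x_1, \ldots, x_n$.

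The main obstacle --- and plausibly the reason the conjecture has only been verified up to $n = 7$ by exhaustive search --- is precisely this global aggregation. The $n$ Hall violators are witnesses for $n$ distinct bipartite graphs that share the same preference data in an interwoven way, and no obvious monovariant changes monotonically around the directed cycle. Induction on $n$ is also delicate: Lemma~\ref{lem:induced-graph} tells us that deleting a candidate can only \emph{add} edges to the Candidate Comparison Graph, so any inductive argument would have to exhibit an ``extremal'' candidate whose removal is controlled in the other direction. A promising alternative, in the spirit of the paper's framework, is to translate the $n$ Hall violators directly into fractional obstructions to flows in the graph of Lemma~\ref{lem:dual-flow}, and use LP duality to expose a global infeasibility that elementary counting cannot detect.
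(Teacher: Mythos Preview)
The statement you are attempting to prove is a \emph{conjecture} in the paper, not a theorem: the paper does not supply a proof. It explicitly leaves Conjecture~\ref{conj:acyclic} open, showing only that it is equivalent to Conjecture~\ref{conj:permutation-distribution} (Proposition~\ref{lem:first-two-conjectures}) and implied by Conjecture~\ref{conj:constraint-choice-graph} (Proposition~\ref{prop:choice-graph}), and reporting computer verification for $n \leq 7$. So there is no ``paper's own proof'' to compare against.

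Your proposal is not a proof either, and you say as much. The first paragraph's structural observations are correct and essentially recover what the paper states in Remark~\ref{rem:compg-subgraph}: any edge $(y,x)$ of $\COMPG$ forces a strict majority to prefer $y$ over $x$, so $\COMPG$ is a subgraph of the strict majority tournament (hence oriented, and acyclic whenever Condorcet cycles are absent). But for the general case you only sketch a plan --- pigeonhole a voter into many Hall violators and ``chase'' constraints around the cycle --- and then explicitly identify the obstacle (no monovariant, no clean induction via Lemma~\ref{lem:induced-graph}) without overcoming it. The closing suggestion to encode the Hall violators as flow obstructions and invoke LP duality is in the spirit of the paper's framework, but the paper itself did not manage to turn that into a proof; the reformulations it \emph{does} give (Conjectures~\ref{conj:permutation-distribution} and~\ref{conj:constraint-choice-graph}) are obtained precisely by packaging the Hall-violator data you describe, and they remain open. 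In short: your preliminary reductions are sound and match known facts, but the core argument is missing, just as it is in the paper.
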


Since our goal here is merely to show the existence of a source node
in \COMPG, it appears like overkill to aim for the ``stronger''
conjecture of being non-Hamiltonian/acyclic.
Despite appearances, Conjecture~\ref{conj:acyclic} is not in fact
stronger than the existence of a source node,
as we show in the following proposition:

\begin{proposition} \label{prop:acyclic-enough}
  \ABM succeeds \emph{on all inputs} if and only if
  Conjecture~\ref{conj:acyclic} is true.
\end{proposition}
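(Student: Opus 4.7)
\textbf{Proof plan for Proposition~\ref{prop:acyclic-enough}.}

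The easy direction is ($\Rightarrow$): if \ABM succeeds on every input, then in particular for every instance the graph \COMPG has a source node (a candidate with no incoming edges). A Hamiltonian directed cycle visits every node with an incoming arc, so no source can exist in a Hamiltonian graph; hence \COMPG cannot be Hamiltonian for any instance, which is exactly Conjecture~\ref{conj:acyclic}.

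The substantive direction is ($\Leftarrow$). The plan is to prove the contrapositive: if \ABM fails on some instance $(\ALLCANDS,\ALLPREFS)$, then there is some (possibly different) instance whose comparison graph is Hamiltonian. Failure of \ABM means every candidate in \COMPG has at least one incoming edge. A standard argument then produces a directed cycle $\CYCLE = y_1 \to y_2 \to \cdots \to y_k \to y_1$ in \COMPG (start anywhere and follow incoming edges backwards in a finite graph until a node repeats). Let $\ALLCANDSP = \SET{y_1,\ldots,y_k}$ be the candidate set on this cycle.

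The key step is to restrict the instance to \ALLCANDSP and apply Lemma~\ref{lem:induced-graph}. Form the restricted instance $(\ALLCANDSP, \ALLPREFS[\ALLCANDSP])$ and its comparison graph $\COMPGP = \CompG{\ALLCANDSP}{\ALLPREFS[\ALLCANDSP]}$. By Lemma~\ref{lem:induced-graph}, $\COMPGP \supseteq \COMPG[\ALLCANDSP]$, so in particular \COMPGP contains every edge of \CYCLE. Since the node set of \COMPGP is exactly \ALLCANDSP and \CYCLE visits every node of \ALLCANDSP, the cycle \CYCLE is a Hamiltonian cycle of \COMPGP. This contradicts Conjecture~\ref{conj:acyclic} applied to $(\ALLCANDSP,\ALLPREFS[\ALLCANDSP])$, completing the proof.

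The main (and essentially only) obstacle is recognizing that, although Conjecture~\ref{conj:acyclic} at first glance asks for the apparently stronger property of being non-Hamiltonian rather than just having a source, the monotonicity provided by Lemma~\ref{lem:induced-graph} lets one ``promote'' an arbitrary cycle in some instance to a Hamiltonian cycle in a smaller instance. Everything else is routine: extracting a cycle from a graph with minimum in-degree at least one, and matching up the definitions of \COMPG across the original and restricted instances.
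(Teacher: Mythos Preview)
Your proof is correct and essentially identical to the paper's argument: both directions hinge on the same key step of using Lemma~\ref{lem:induced-graph} to promote an arbitrary cycle in \COMPG to a Hamiltonian cycle in the restricted instance on the cycle's vertex set. The only cosmetic difference is that the paper organizes the ($\Leftarrow$) direction as ``Conjecture~\ref{conj:acyclic} $\Rightarrow$ \COMPG is in fact acyclic on all inputs $\Rightarrow$ \ABM succeeds,'' whereas you prove the contrapositive directly; the underlying reasoning is the same.
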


Notice that the proposition does not say that whenever a \emph{specific}
instance violates Conjecture~\ref{conj:acyclic},
the algorithm will fail on \emph{that} instance.
As will be evident in the proof, it only implies that the algorithm
fails on \emph{some} (potentially different) instance. 

\begin{emptyproof}
\begin{enumerate}
\item For the first direction,
  if Conjecture~\ref{conj:acyclic} is true,
  then \COMPG is non-Hamiltonian for all inputs.
  We claim that this implies that in fact,
  \COMPG is \emph{acyclic} for all inputs.
  Suppose that we had an instance $(\ALLCANDS, \ALLPREFS)$
  for which \COMPG contains a directed cycle, say,
  $C = (x_1, x_2, \ldots, x_k, x_1)$ for some $k$.
  Let $\ALLCANDSP = \SET{x_1, \ldots, x_k}$.
  Consider the instance $(\ALLCANDSP, \ALLPREFS[\ALLCANDSP])$.
  By Lemma~\ref{lem:induced-graph},
  the graph \CompG{\ALLCANDSP}{\ALLPREFS[\ALLCANDSP]}
  is a supergraph of the induced subgraph
  $\COMPG[\ALLCANDSP]$, and must therefore contain a cycle including
  all vertices \ALLCANDSP, i.e.,
  \CompG{\ALLCANDSP}{\ALLPREFS[\ALLCANDSP]} is Hamiltonian.

  We have thus shown that for all instances, \COMPG is acyclic, meaning that
  for all inputs, \COMPG has a source node, which is a safe output for
  \ABM.
\item For the converse direction, assume that
  Conjecture~\ref{conj:acyclic} is false, and consider an instance
  $(\ALLCANDS, \ALLPREFS)$ for which \CompG{\ALLCANDS}{\ALLPREFS}
  is Hamiltonian.
  Because each node has at least one incoming edge,
  \CompG{\ALLCANDS}{\ALLPREFS} cannot have a source node. \QED
\end{enumerate}
\end{emptyproof}

\subsection{Distributions of Permutations}

We next derive a much simpler-looking --- but actually equivalent ---
conjecture, which is phrased only in terms of distributions of
permutations.
The key lemma for deriving this equivalent conjecture is the
following:

\begin{lemma} \label{lem:hall-application}
  \COMPG contains the edge $(y,x)$ if and only if
  there exists a set \EWit{x}{y} of candidates with
  $x \in \EWit{x}{y}$ and $y \notin \EWit{x}{y}$ such that

  \begin{align}
    \SetCard{\CandFirst{y}{\EWit{x}{y}}} +
    \SetCard{\CandLast{x}{\Compl{\EWit{x}{y}}}}
  & > m. \label{eqn:general-hall}
  \end{align}
\end{lemma}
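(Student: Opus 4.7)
The plan is to recognize the lemma as a direct application of Hall's Marriage Theorem to the bipartite graph \Bip{x}{y}, after carefully translating the edge condition into set-theoretic language. For each voter $v$, introduce the ``upper set'' $A_v = \Set{z}{\WPref[v]{x}{z}}$ (which always contains $x$) and the ``lower set'' $B_v = \Set{z}{\WPref[v]{z}{y}}$ (which always contains $y$). By the definition of \Bip{x}{y}, the edge $(v,v')$ is present if and only if $A_v \cap B_{v'} \neq \emptyset$. The two sides of the stated biconditional will then become the two directions of Hall's theorem.

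For the ``if'' direction I would assume a witness set $Z = \EWit{x}{y}$ satisfying \eqref{eqn:general-hall} and take $S := \CandLast{x}{\Compl{Z}}$ as a candidate Hall-violating subset of the left vertices. The point is that $v \in S$ forces $A_v \subseteq Z$ (every candidate outside $Z$ is ranked strictly above $x$), while $v' \in \CandFirst{y}{Z}$ forces $B_{v'} \subseteq \Compl{Z}$ (every candidate in $Z$ is ranked strictly below $y$). Consequently $A_v \cap B_{v'} = \emptyset$ for all such pairs, so $\Neigh{S}$ is disjoint from $\CandFirst{y}{Z}$, yielding $\SetCard{\Neigh{S}} \leq m - \SetCard{\CandFirst{y}{Z}} < \SetCard{S}$ by \eqref{eqn:general-hall}. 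Hall's condition fails and \Bip{x}{y} has no perfect matching, i.e., $(y,x) \in \COMPG$.

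For the ``only if'' direction I would start from a Hall-violator $S$ with $\SetCard{\Neigh{S}} < \SetCard{S}$ and set $T := \ALLVOTERS \setminus \Neigh{S}$, so that $\SetCard{S} + \SetCard{T} > m$. The natural candidate witness is $Z := \bigcup_{v \in S} A_v$. Clearly $x \in Z$. For $v \in S$ we have $A_v \subseteq Z$, hence $v \in \CandLast{x}{\Compl{Z}}$; and for $v' \in T$ non-adjacency to every $v \in S$ gives $B_{v'} \cap A_v = \emptyset$ for each such $v$, hence $B_{v'} \cap Z = \emptyset$, placing $v' \in \CandFirst{y}{Z}$. Adding cardinalities then gives \eqref{eqn:general-hall}.

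The one subtle step is verifying $y \notin Z$, which needs the Hall violation to be nontrivial. If instead $y \in A_v$ for some $v \in S$, then because $y \in B_{v'}$ for every $v'$, the voter $v$ would be adjacent to \emph{all} right vertices, forcing $\Neigh{S} = \ALLVOTERS$ and contradicting $\SetCard{\Neigh{S}} < \SetCard{S} \leq m$. That dispenses with the only obstacle; the remainder is bookkeeping about the definitions of $\CandFirst{\cdot}{\cdot}$ and $\CandLast{\cdot}{\cdot}$, together with a straightforward invocation of Hall's theorem.
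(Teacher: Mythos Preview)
Your proof is correct and follows essentially the same route as the paper's: both directions are reduced to Hall's Theorem, with the witness set taken to be $Z=\bigcup_{v\in S}A_v$ for a contracting set $S$, and the case $y\in Z$ ruled out exactly as you do. The only cosmetic difference is that the paper chooses a \emph{maximal} contracting set so as to obtain the equalities $V_{x,y}=\CandLast{x}{\Compl{Z}}$ and $\Compl{\Neigh{V_{x,y}}}=\CandFirst{y}{Z}$, whereas you work with the inclusions $S\subseteq\CandLast{x}{\Compl{Z}}$ and $T\subseteq\CandFirst{y}{Z}$, which already suffice for the cardinality bound.
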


\begin{emptyproof}
The proof relies on the well-known Hall Theorem:
\begin{theorem}[Hall's Bipartite Matching Theorem]
  Let $G = (X \cup Y, E)$ be a bipartite graph
  with $\SetCard{X} = \SetCard{Y}$.
  For any vertex set $S$, let \Neigh{S} denote the neighbors of $S$.
  $G$ has a perfect matching if and only if there is no contracting
  vertex set, i.e., no set $X' \subseteq X$ with
  $\SetCard{\Neigh{X'}} < \SetCard{X'}$.
\end{theorem}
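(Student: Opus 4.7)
The plan is to invoke Hall's theorem on $\Bip{x}{y}$ — whose lack of a perfect matching is by definition equivalent to $(y,x) \in \COMPG$ — and translate between voter-side and candidate-side conditions through the following dictionary. For each voter $v$, set $Z_v = \Set{z \in \ALLCANDS}{\WPref[v]{x}{z}}$ (which always contains $x$) and $W_v = \Set{z \in \ALLCANDS}{\WPref[v]{z}{y}}$ (which always contains $y$). By the definition of $\Bip{x}{y}$, an edge $(v,v')$ is present iff $Z_v \cap W_{v'} \neq \emptyset$. This dictionary handles the two operators in the lemma cleanly: for any candidate set $S$, a voter $v$ lies in $\CandLast{x}{\Compl{S}}$ iff $Z_v \subseteq S$, and a voter $v'$ lies in $\CandFirst{y}{S}$ iff $W_{v'} \cap S = \emptyset$.

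For the \emph{if} direction I would take the hypothesized $\EWit{x}{y}$ and set $V' := \CandLast{x}{\Compl{\EWit{x}{y}}}$ and $V'' := \CandFirst{y}{\EWit{x}{y}}$. The dictionary forces $Z_v \subseteq \EWit{x}{y}$ for every $v \in V'$ and $W_{v''} \cap \EWit{x}{y} = \emptyset$ for every $v'' \in V''$, so no edge of $\Bip{x}{y}$ connects $V'$ with $V''$; equivalently, $\Neigh{V'} \subseteq \ALLVOTERS \setminus V''$. Combining with the hypothesis $\SetCard{V'} + \SetCard{V''} > m$ yields $\SetCard{\Neigh{V'}} \leq m - \SetCard{V''} < \SetCard{V'}$, so Hall's theorem rules out a perfect matching.

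For the \emph{only if} direction, Hall's theorem applied to the non-matchable $\Bip{x}{y}$ supplies some $V' \subseteq \ALLVOTERS$ with $\SetCard{\Neigh{V'}} < \SetCard{V'}$. I would set $\EWit{x}{y} := \bigcup_{v \in V'} Z_v$ and $V'' := \ALLVOTERS \setminus \Neigh{V'}$. Non-emptiness of $V'$ (the empty set cannot violate Hall) together with $x \in Z_v$ gives $x \in \EWit{x}{y}$; the strict Hall violation forces $\SetCard{V''} > m - \SetCard{V'} \geq 0$, so $V''$ is non-empty, and for any $v'' \in V''$ we have $W_{v''} \cap \EWit{x}{y} = \emptyset$ by the definition of $\Neigh{V'}$, which together with $y \in W_{v''}$ forces $y \notin \EWit{x}{y}$. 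The dictionary now gives $V'' \subseteq \CandFirst{y}{\EWit{x}{y}}$ and $V' \subseteq \CandLast{x}{\Compl{\EWit{x}{y}}}$, and summing yields $\SetCard{\CandFirst{y}{\EWit{x}{y}}} + \SetCard{\CandLast{x}{\Compl{\EWit{x}{y}}}} \geq \SetCard{V''} + \SetCard{V'} > m$.

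The main obstacle is purely notational: I need to be careful that the set-argument versions of $\CandFirst$ and $\CandLast$, which demand \emph{strict} preferences against every member of the set, line up correctly with the \emph{weak} preferences defining $Z_v$ and $W_v$, and that the corner cases $x \in \EWit{x}{y}$ and $y \notin \EWit{x}{y}$ are guaranteed by the non-emptiness of $V'$ and $V''$. Once the dictionary is in place, each direction is a one-line application of Hall.
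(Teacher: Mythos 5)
The statement you were given is Hall's Bipartite Matching Theorem itself, which the paper quotes as a known black box inside its proof of Lemma~\ref{lem:hall-application} and never proves. Your proposal does not prove Hall's theorem either: both directions of your argument explicitly \emph{invoke} it (``Hall's theorem rules out a perfect matching''; ``Hall's theorem \dots supplies some $V'$''). Taken literally as a proof of the displayed theorem, the proposal is therefore circular: a proof of Hall's theorem requires an independent argument (augmenting paths, induction on $\SetCard{X}$ with the critical/non-critical case split, or a deficiency count), none of which appears.

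What you have actually written is a proof of the enclosing Lemma~\ref{lem:hall-application} (the characterization of the edges of \COMPG via witness sets \EWit{x}{y}), and as such it is correct and follows essentially the same route as the paper: both directions reduce to Hall's condition on \Bip{x}{y}, with the witness set $\EWit{x}{y}=\bigcup_{v\in V'}\Set{z}{\WPref[v]{x}{z}}$ built from a contracting voter set $V'$. Your version is marginally cleaner in the forward direction: you only need the inclusions $V'\subseteq\CandLast{x}{\Compl{\EWit{x}{y}}}$ and $\ALLVOTERS\setminus\Neigh{V'}\subseteq\CandFirst{y}{\EWit{x}{y}}$, whereas the paper takes a \emph{maximal} contracting set in order to argue the corresponding equalities; and deducing $y\notin\EWit{x}{y}$ from the non-emptiness of $V''$ is a valid alternative to the paper's direct observation that a voter with \WPref[v]{x}{y} is adjacent to every right-hand vertex. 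If the intended target really was Hall's theorem, you still owe a proof of it; if it was Lemma~\ref{lem:hall-application}, the proposal is complete and matches the paper's approach.
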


  Fix a pair $x,y$ of candidates.
  By Hall's Theorem, $(y,x) \in \COMPG$
  (i.e., \Bip{x}{y} has no perfect matching)
  if and only if there is a contracting set of voters $V_{x,y}$.
  By definition of \COMPG, for every voter $v$ on the left,
  the neighborhood $\Neigh{v}$ consists of all voters $v'$ on the
  right such that there exists a candidate $z$
  with \WPref[v]{x}{z} and \WPref[v']{z}{y}. 

\begin{enumerate}
\item For the first direction,
  we assume that \COMPG contains the edge $(y,x)$.
  Let $V_{x,y}$ be a maximal contracting set.
  Let \EWit{x}{y} be the set of all candidates $z$ such that
  at least one voter $v \in V_{x,y}$ has \WPref[v]{x}{z}.
  Then, $\Neigh{V_{x,y}} = \Set{v'}{\mbox{there exists a } z \in
  \EWit{x}{y} \mbox{ with } \WPref[v']{z}{y}}$.
  This implies two things:
  \begin{itemize}
  \item $\Compl{\Neigh{V_{x,y}}} = \CandFirst{y}{\EWit{x}{y}}$
    is the set of all voters who rank $y$ strictly ahead of all of \EWit{x}{y};
    this follows directly from the preceding characterization.
  \item $V_{x,y} = \CandLast{x}{\Compl{\EWit{x}{y}}}$.
    The reason is that every voter $v \in V_{x,y}$ 
    ranks only candidates in \EWit{x}{y} (weakly) after $x$,
    so $V_{x,y} \subseteq \CandLast{x}{\Compl{\EWit{x}{y}}}$.
    Since $\Neigh{V_{x,y}} = \Neigh{\CandLast{x}{\Compl{\EWit{x}{y}}}}$,
    the set \CandLast{x}{\Compl{\EWit{x}{y}}} is also a candidate for
    a contracting set, and must equal $V_{x,y}$ by maximality of $V_{x,y}$.
  \end{itemize}
  By definition of \EWit{x}{y}, we always have $x \in \EWit{x}{y}$.
  Also, we always have $y \notin \EWit{x}{y}$,
  because any voter $v$ (on the left) with \Pref[v]{x}{y}
  has edges to all voters $v'$ on the right,
  and can therefore not be in a contracting set $V_{x,y}$.

  Next, we consider the cardinalities of the sets involved. Because
  \[
    m - \SetCard{\CandFirst{y}{\EWit{x}{y}}}
    \; = \; m - \SetCard{\Compl{\Neigh{V_{x,y}}}}
    \; = \; \SetCard{\Neigh{V_{x,y}}}
    \; \stackrel{\text{Hall}}{<} \; \SetCard{V_{x,y}}
    \; = \; \SetCard{\CandLast{x}{\Compl{\EWit{x}{y}}}},
  \]
  we have shown that there exists a set \EWit{x}{y} with
  $x \in \EWit{x}{y}$ and $y \notin \EWit{x}{y}$ such that
  $\SetCard{\CandFirst{y}{\EWit{x}{y}}}
    + \SetCard{\CandLast{x}{\Compl{\EWit{x}{y}}}}
    > m$.

  \item For the converse direction, assume that 
  there exists a set \EWit{x}{y} of candidates with
  $x \in \EWit{x}{y}$ and $y \notin \EWit{x}{y}$ such that
  $\SetCard{\CandFirst{y}{\EWit{x}{y}}}
  + \SetCard{\CandLast{x}{\Compl{\EWit{x}{y}}}}
  > m$.

  Let $V_{x,y} = \CandLast{x}{\Compl{\EWit{x}{y}}}$ be the set of all
  voters who rank $x$ behind all of $\Compl{\EWit{x}{y}}$.
  We will show that $V_{x,y}$ is contracting.
  Thereto, the important step is to characterize the neighborhood
  $\Neigh{V_{x,y}}$.
  By definition, it consists of all voters $v'$ such that
  there exists a voter $v \in V_{x,y}$ and a candidate $z$
  with \WPref[v]{x}{z} and \WPref[v']{z}{y}.
  Because each voter $v \in V_{x,y}$ ranks $x$ behind all of
  $\Compl{\EWit{x}{y}}$, the only potential candidates for $z$ are
  candidates in \EWit{x}{y}.
  In particular, no voter $v' \in \CandFirst{y}{\EWit{x}{y}}$
  can be in $\Neigh{V_{x,y}}$, implying that 
  $\CandFirst{y}{\EWit{x}{y}} \subseteq \Compl{\Neigh{V_{x,y}}}$.

  This implies that
  $\SetCard{\CandFirst{y}{\EWit{x}{y}}} \leq \SetCard{\Compl{\Neigh{V_{x,y}}}}$,
  so $\SetCard{\Neigh{V_{x,y}}} \leq m - \SetCard{\CandFirst{y}{\EWit{x}{y}}}$.
  And by definition of $V_{x,y}$, we also have
  $\SetCard{V_{x,y}} = \SetCard{\CandLast{x}{\Compl{\EWit{x}{y}}}}$.
  Together with the assumption that 
  $\SetCard{\CandFirst{y}{\EWit{x}{y}}}
  + \SetCard{\CandLast{x}{\Compl{\EWit{x}{y}}}}
  > m$, we get that
  $m < \SetCard{V_{x,y}} + (m - \SetCard{\Neigh{V_{x,y}}})$,
  implying that $V_{x,y}$ is contracting.
  By Hall's Theorem, \Bip{x}{y} has no perfect matching,
  so \COMPG contains the edge $(y,x)$. \QED
\end{enumerate}
\end{emptyproof}

\begin{remark} \label{rem:compg-subgraph}
As an easy corollary of Lemma~\ref{lem:hall-application}, notice that
the constraint \eqref{eqn:general-hall} implies that strictly more
than half of the voters prefer $y$ over $x$.
This is because $x \in \EWit{x}{y}$ and $y \notin \EWit{x}{y}$ implies that
all voters in \CandFirst{y}{\EWit{x}{y}} and all voters in 
\CandLast{x}{\Compl{\EWit{x}{y}}} rank $y$ ahead of $x$.
Because the combined cardinalities of the two sets add up to more than $m$,
by the Pigeon Hole Principle, at least one of the two sets
\CandFirst{y}{\EWit{x}{y}}, \CandLast{x}{\Compl{\EWit{x}{y}}} must contain
more than half of all voters.
This proves that \COMPG is a subgraph of the weak comparison graph.
\end{remark}

Based on Lemma~\ref{lem:hall-application}, we formulate the following
conjecture, and prove it equivalent to Conjecture~\ref{conj:acyclic}.

\begin{conjecture} \label{conj:permutation-distribution}
Let $\EWitS{1}, \EWitS{2}, \ldots, \EWitS{n} \subseteq \SET{1, \ldots, n}$
be arbitrary sets with $i \in \EWitS{i}$.
Define the following two indicator functions over elements $i$ and
total orders \PREF:
\begin{align}
  \IndPos{i}{\PREF} & = \begin{cases}
    1 & \text{if } \Pref{i+1}{\EWitS{i}} \\
    0 & \text{otherwise};
                \end{cases}
  & \IndNeg{i}{\PREF} & = \begin{cases}
    1 & \text{if } \Pref{\Compl{\EWitS{i}}}{i}\\
    0 & \text{otherwise};
                \end{cases}
\end{align}
here and for the rest of this proof, all additions/subtractions are
modulo $n$; that is, $n+1 := 1$ and $1-1 := n$.

Let \PREFDIST be any distribution over total orders \PREF of
$\SET{1, \ldots, n}$.
Then, there exists an $i$ such that
\begin{align}
  \Expect[\PREF \sim \PREFDIST]{\IndPos{i}{\PREF} + \IndNeg{i}{\PREF}} & \leq 1.
\end{align}
\end{conjecture}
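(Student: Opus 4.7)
The plan is to recast the conjecture via LP/minimax duality and then try to exhibit an explicit dual weighting. The naive attempt would be to prove the pointwise inequality $\sum_{i=1}^{n} \bigl(\alpha(i, \succ) + \beta(i, \succ)\bigr) \le n$ for every total order $\succ$, which by linearity of expectation would immediately imply that some $i$ has $\mathbf{E}[\alpha(i,\succ) + \beta(i,\succ)] \le 1$. However, this pointwise bound is false in general: for $Z_1 = \{1\}$ and $Z_i = \{1,\ldots,n\}$ for $i \ge 2$, the order $2 \succ 3 \succ \cdots \succ n \succ 1$ yields $\alpha(1,\succ) + \beta(1,\succ) = 2$ and $\beta(i,\succ) = 1$ vacuously for every $i \ge 2$, for a total of $n+1$. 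So a weighted version is unavoidable.

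By the minimax theorem applied to the zero-sum game in which one player chooses a distribution $\mathcal{D}$ over total orders and the other picks an index $i$, the conjecture is equivalent to the existence of nonnegative weights $w_1, \ldots, w_n$ with $\sum_i w_i = 1$ satisfying $\sum_i w_i \bigl(\alpha(i, \succ) + \beta(i, \succ)\bigr) \le 1$ for every total order $\succ$. The first step is to dispose of degenerate cases. If $Z_i = \{1,\ldots,n\}$ for some $i$, then $i+1 \in Z_i$ forces $\alpha(i, \succ) \equiv 0$ while $\overline{Z_i} = \emptyset$ makes $\beta(i, \succ) \equiv 1$, so placing all weight on that $i$ already gives the bound. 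More generally, writing $I_0 = \{i : i+1 \in Z_i\}$, if $I_0 \ne \emptyset$ the uniform weight $w_i = 1/|I_0|$ on $I_0$ works, since on $I_0$ one has $\alpha(i,\succ) \equiv 0$ and $\sum_{i \in I_0} \beta(i,\succ) \le |I_0|$.

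In the remaining case, where $i+1 \notin Z_i$ for every $i$, the next step is to exploit the rigid combinatorial structure forced by $\alpha(i, \succ) = \beta(i, \succ) = 1$. The first equation forces $Z_i$ to lie strictly below $i+1$ in $\succ$, the second forces $\overline{Z_i}$ to lie strictly above $i$; together they pin $Z_i$ down as a contiguous tail of $\succ$ starting just below $i+1$, and in particular force $i+1 \succ i$. Combined with the cyclic identity that the number of indices $i$ with $i+1 \succ i$ is at most $n-1$ (since the cyclic sequence of ranks cannot be strictly decreasing all the way around), this yields $\sum_i \alpha(i, \succ) \le n-1$ unconditionally. A natural candidate weighting is then $w_i \propto 1/|\overline{Z_i}|$, so that each $\beta(i)$ contribution can be charged against the distinct elements of $\overline{Z_i}$ that must lie above $i$; the goal is to verify the weighted bound by a charging/telescoping argument along the tails of $\succ$.

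The hard part will be that the cyclic coupling between $Z_i$ and $Z_{i+1}$, through the shared index $i+1$, makes the ``right'' weights depend on more than just the sizes $|Z_i|$, so a clean closed-form formula for $w_i$ likely does not suffice. A more global construction --- reading the weights off from a fractional matching or from a flow in a suitable network attached to the sets $Z_i$ --- is probably necessary. The fact that the statement has only been verified by exhaustive computer search for $n \le 7$ is consistent with the expectation that a subtle structural argument beyond uniform averaging will be required to settle it.
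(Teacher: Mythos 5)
The statement you are trying to prove is one of the paper's open \emph{conjectures}: the paper gives no proof of it, only shows it equivalent to Conjecture~\ref{conj:acyclic}, shows it is implied by Conjecture~\ref{conj:constraint-choice-graph}, and reports computer verification for $n\le 7$. Your proposal, by your own admission in its final paragraph, also does not prove it; so the honest verdict is that the main argument is entirely missing, not that a step fails. What you do establish is correct but preliminary: (i) the counterexample showing the pointwise bound $\sum_i(\alpha(i,\succ)+\beta(i,\succ))\le n$ is false is valid; (ii) the minimax reformulation is sound (both strategy spaces are finite, so the conjecture is indeed equivalent to the existence of weights $w\in\Delta_n$ with $\sum_i w_i(\alpha(i,\succ)+\beta(i,\succ))\le 1$ for every single order $\succ$); (iii) the degenerate case is handled --- in fact more simply than you state, since any single $i$ with $i+1\in Z_i$ has $\alpha(i,\succ)\equiv 0$ and hence already witnesses the conjecture, no averaging over $I_0$ needed; and (iv) the observation that $\alpha(i,\succ)=1$ forces $i+1\succ i$, hence $\sum_i\alpha(i,\succ)\le n-1$ by the cyclic-rank argument, is correct.

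Two substantive cautions beyond the missing core. First, your structural claim that $\alpha(i,\succ)=\beta(i,\succ)=1$ ``pins $Z_i$ down as a contiguous tail'' is not quite right: the two conditions only sandwich $Z_i$ between the set of elements ranked weakly below $i$ and the set ranked strictly below $i+1$; elements ranked strictly between $i+1$ and $i$ may lie on either side, so $Z_i$ is not determined and any charging argument must cope with this slack. Second, the proposed weighting $w_i\propto 1/|\overline{Z_i}|$ cannot work as stated even in easy instances (e.g.\ if all $Z_i=\{i\}$ then all weights are equal and one must still verify $\sum_i(\alpha+\beta)\le n$ for every $\succ$ in that instance, which your own style of counterexample shows is delicate). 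Your minimax reformulation is a genuinely different lens from the paper's two reformulations (non-Hamiltonicity of the candidate comparison graph via Hall's theorem, and acyclicity of induced subgraphs of constraint-choice graphs), and it could conceivably be useful, but as it stands the conjecture remains open and your proposal does not close it.
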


\begin{proposition} \label{lem:first-two-conjectures}
Conjecture~\ref{conj:permutation-distribution} is true if and only if
Conjecture~\ref{conj:acyclic} is true.
\end{proposition}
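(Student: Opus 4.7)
The plan is to use Lemma~\ref{lem:hall-application} as an exact dictionary: each directed edge of \COMPG corresponds to a witness set satisfying a Hall-type inequality, and a Hamiltonian cycle of \COMPG corresponds to a cyclic family of such witness sets indexed by $j = 1, \ldots, n$. The two conjectures then translate into each other by choosing, in one direction, the empirical-vote distribution, and in the other direction, a voting instance built from the given distribution.

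For the easier direction (Conjecture~\ref{conj:permutation-distribution} implies Conjecture~\ref{conj:acyclic}), I would suppose some instance $(\ALLCANDS,\ALLPREFS)$ has Hamiltonian \COMPG and relabel candidates so that the cycle reads $1 \to n \to n-1 \to \cdots \to 2 \to 1$; this places edge $(j+1,j)$ in \COMPG for every $j$ (indices mod $n$). Applying Lemma~\ref{lem:hall-application} with $x = j$ and $y = j+1$ yields a set $\EWit{j}{j+1}$ containing $j$ but not $j+1$, with $\SetCard{\CandFirst{j+1}{\EWit{j}{j+1}}} + \SetCard{\CandLast{j}{\Compl{\EWit{j}{j+1}}}} > m$. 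Setting $\EWitS{j} := \EWit{j}{j+1}$ and taking \PREFDIST to be uniform over the $m$ voters' preferences, the event $\IndPos{j}{\PREF[v]} = 1$ is exactly $v \in \CandFirst{j+1}{\EWitS{j}}$, and $\IndNeg{j}{\PREF[v]} = 1$ is exactly $v \in \CandLast{j}{\Compl{\EWitS{j}}}$. Summing yields $\Expect[\PREF \sim \PREFDIST]{\IndPos{j}{\PREF} + \IndNeg{j}{\PREF}} > 1$ for every $j$, contradicting Conjecture~\ref{conj:permutation-distribution}.

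For the converse direction, I would assume Conjecture~\ref{conj:acyclic} and let $\EWitS{1}, \ldots, \EWitS{n}$ together with \PREFDIST be a purported counterexample to Conjecture~\ref{conj:permutation-distribution}. First, any such counterexample must satisfy $j+1 \notin \EWitS{j}$ for every $j$, since otherwise $\IndPos{j}$ is identically $0$ and the expectation cannot exceed $1$. Because the $n$ violated inequalities are strict, a sufficiently small perturbation of \PREFDIST preserves them while producing rational probabilities with a common denominator $m$. I would then construct an instance on candidate set $\SET{1,\ldots,n}$ with $m$ voters realizing the resulting multiplicities exactly. By construction $\SetCard{\CandFirst{j+1}{\EWitS{j}}}$ and $\SetCard{\CandLast{j}{\Compl{\EWitS{j}}}}$ equal $m$ times the corresponding expectations, so the Hall-type inequality of Lemma~\ref{lem:hall-application} holds for every $j$; hence every edge $(j+1,j)$ lies in \COMPG, forming a Hamiltonian cycle and contradicting Conjecture~\ref{conj:acyclic}.

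The main obstacle is not conceptual but logistical: the rational-approximation step in the second direction requires that the strict inequalities $\Expect[\PREF \sim \PREFDIST]{\IndPos{j}{\PREF} + \IndNeg{j}{\PREF}} > 1$ survive the rounding to a distribution with integer voter counts. Strictness together with the finiteness of the constraint system supplies a uniform slack, so the approximation is routine, but it is the one place in the argument where one needs to be careful.
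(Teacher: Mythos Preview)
Your proposal is correct and follows essentially the same approach as the paper's proof: both directions hinge on Lemma~\ref{lem:hall-application} as a dictionary between Hamiltonian cycles in \COMPG and families of witness sets, with the empirical distribution used in one direction and a rational approximation of \PREFDIST used to build a voting instance in the other. You are in fact slightly more careful than the paper in one place: you explicitly argue that a counterexample to Conjecture~\ref{conj:permutation-distribution} must have $j+1 \notin \EWitS{j}$ (since otherwise $\IndPos{j}{\cdot}\equiv 0$ forces the expectation to be at most $1$), a hypothesis needed to invoke Lemma~\ref{lem:hall-application}; the paper leaves this implicit.
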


\begin{proof}
We first show that Conjecture~\ref{conj:permutation-distribution}
implies Conjecture~\ref{conj:acyclic}, by proving the contrapositive.
Assume that \COMPG is Hamiltonian, containing a directed cycle
$x_n \to x_{n-1} \to \cdots \to x_1 \to x_n$ comprising all $n$ candidates.
By Lemma~\ref{lem:hall-application},
there exist sets $\EWitS{i} = \EWit{x_i}{x_{i+1}}$
(with $\EWitS{n} = \EWit{x_n}{x_1}$)
with $x_i \in \EWitS{i}, x_{i+1} \notin \EWitS{i}$,
such that for all $i=1, \ldots, n$, we have
  \begin{align}
    \SetCard{\CandFirst{x_{i+1}}{\EWitS{i}}}
    + \SetCard{\CandLast{x_i}{\Compl{\EWitS{i}}}}
  \; > \; m. \label{eqn:hall}
  \end{align}

We define a distribution \PREFDIST over rankings by drawing a
uniformly random voter from $\SET{1, \ldots, m}$,
and choosing this voter's ranking.
Because the distribution is uniform, we obtain that
\begin{align}
  \Expect[\PREF \sim \PREFDIST]{\IndPos{i}{\PREF}}
  & = \frac{1}{m} \cdot \SetCard{\CandFirst{x_{i+1}}{\EWitS{i}}},
  & \Expect[\PREF \sim \PREFDIST]{\IndNeg{i}{\PREF}}
  & = \frac{1}{m} \cdot \SetCard{\CandLast{x_i}{\Compl{\EWitS{i}}}}.
    \label{eqn:expectation-uniform}
\end{align}
The inequality~\eqref{eqn:hall} then implies that
$\Expect[\PREF \sim \PREFDIST]{\IndPos{i}{\PREF} + \IndNeg{i}{\PREF}} > 1$
for all $i$,
showing that Conjecture~\ref{conj:permutation-distribution} is
violated.

For the converse direction, assume that \PREFDIST is a distribution
over total orders of $\SET{1, \ldots, n}$ such that
\begin{align} 
  \Expect[\PREF \sim \PREFDIST]{\IndPos{i}{\PREF} + \IndNeg{i}{\PREF}}
  & > 1, \text{ for all $i$}. \label{eqn:permutation-violation}
\end{align}
Define $\delta := \min_i
\Expect[\PREF \sim \PREFDIST]{\IndPos{i}{\PREF} + \IndNeg{i}{\PREF}}$;
because the number of candidates is finite, $\delta$ is well-defined,
and $\delta > 1$.
Therefore, with sufficiently small changes to \PREFDIST,
we can ensure that the probability for each total order \PREF is a
rational number, while preserving all (strict)
inequalities~\eqref{eqn:permutation-violation}.
Once the probabilities are all rational, we can write them with a
common denominator, meaning that we can define \PREFDIST as a uniform
distribution over a finite multi-set of rankings;
in turn, we can consider these rankings as voters.

Because the distribution is uniform over voters,
we can apply the characterization~\eqref{eqn:expectation-uniform} to
conclude that 
$\SetCard{\CandFirst{x_{i+1}}{\EWitS{i}}}
+ \SetCard{\CandLast{x_i}{\Compl{\EWitS{i}}}}
> m$ for all candidates $i$.
By Lemma~\ref{lem:hall-application},
applied to each pair $(x_i, x_{i+1})$,
the graph \COMPG contains each edge $(x_{i+1}, x_i)$,
so \COMPG is Hamiltonian.
\end{proof}

\subsection{A Graph-Theoretic Reformulation}

Our attempts to prove Conjecture~\ref{conj:permutation-distribution}
(so far unsuccessful) have been based on proofs by contradiction.
The assumed constraints \eqref{eqn:permutation-violation} prescribe
several constraints on rankings that must hold simultaneously;
using transitivity, this leads to a contradiction by forcing
preferences to contain cycles.
The essence of this approach is captured by another conjecture.
To formulate it, we define the following class of directed graphs,
which we term \emph{Constraint-Choice Graphs}.

\begin{definition}[Constraint-Choice Graph]
  \label{def:constraint-choice-graph}
  Let $\CandNodes{n} = \SET{\CandNode{1}, \ldots, \CandNode{n}},
  \SetNodes{n} = \SET{\SetNode{1}, \ldots, \SetNode{n}},
  \ComplNodes{n} = \SET{\ComplNode{1}, \ldots, \ComplNode{n}}$ 
  be three disjoint sets of nodes.
  A \emph{constraint-choice graph} for a given $n$ contains $3n$ nodes
  $U_n = \CandNodes{n} \cup \SetNodes{n} \cup \ComplNodes{n}$, 
  and the following edges:
  \begin{itemize}
  \item For each $i$, it contains the directed edges%
\footnote{As before, we define $1-1:= n$ and $n+1 := 1$.}
  $(\CandNode{i}, \SetNode{i-1}), (\CandNode{i}, \ComplNode{i-1}), 
   (\SetNode{i}, \CandNode{i}), (\ComplNode{i}, \CandNode{i})$.
  \item For each $i, j$ with $j \neq i, j \neq i-1$, it contains
    exactly one of the two directed edges 
    $(\SetNode{j}, \CandNode{i}), (\CandNode{i}, \ComplNode{j})$.
  \end{itemize}
\end{definition}

An example of a constraint choice graph is shown in
Figure~\ref{fig:constraint-choice-graph}.
The edges listed first in Definition~\ref{def:constraint-choice-graph}
are shown in solid black, while the edges listed second are shown in
dashed red lines.

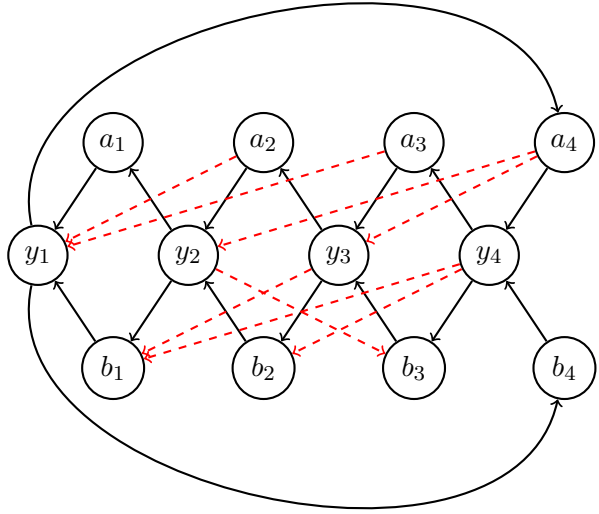
\begin{figure}[htb]
\begin{center}
\begin{tikzpicture}[auto,thick,->,active/.style={circle,draw=black}]

  \node[active] (y1) at (1,0) {\CandNode{1}};
  \node[active] (y2) at (3,0) {\CandNode{2}};
  \node[active] (y3) at (5,0) {\CandNode{3}};
  \node[active] (y4) at (7,0) {\CandNode{4}};

  \node[active] (a1) at (2,1.5) {\SetNode{1}};
  \node[active] (a2) at (4,1.5) {\SetNode{2}};
  \node[active] (a3) at (6,1.5) {\SetNode{3}};
  \node[active] (a4) at (8,1.5) {\SetNode{4}};

  \node[active] (b1) at (2,-1.5) {\ComplNode{1}};
  \node[active] (b2) at (4,-1.5) {\ComplNode{2}};
  \node[active] (b3) at (6,-1.5) {\ComplNode{3}};
  \node[active] (b4) at (8,-1.5) {\ComplNode{4}};

  \draw (a1) to (y1);
  \draw (b1) to (y1);
  \draw (y1) to [bend left = 90] (a4);
  \draw (y1) to [bend right = 90] (b4);
  \draw (a2) to (y2);
  \draw (b2) to (y2);
  \draw (y2) to (a1);
  \draw (y2) to (b1);
  \draw (a3) to (y3);
  \draw (b3) to (y3);
  \draw (y3) to (a2);
  \draw (y3) to (b2);
  \draw (a4) to (y4);
  \draw (b4) to (y4);
  \draw (y4) to (a3);
  \draw (y4) to (b3);

  \begin{scope}[red,dashed]
    \draw (y3) to (b1);
    \draw (y4) to (b1);
    \draw (y4) to (b2);
    \draw (a2) to (y1);
    \draw (a3) to (y1);
    \draw (y2) to (b3);
    \draw (a4) to (y3);
    \draw (a4) to (y2);
  \end{scope}
\end{tikzpicture}
\end{center}
\caption{An illustration of a constraint-choice graph for $n=4$ candidates.
  The graph depicted here corresponds to the sets
  $\EWitS{1} = \SET{1}, \EWitS{2} = \SET{1,2}, \EWitS{3} = \SET{1,3},
  \EWitS{4} = \SET{2,3,4}$ in the construction of the proof of
  Proposition~\ref{prop:choice-graph}.
\label{fig:constraint-choice-graph}}
\end{figure}








\begin{conjecture} \label{conj:constraint-choice-graph}
  For every $n$ and every constraint choice graph $G_n$ of $3n$ nodes,
  there exists a non-empty index set $S \subseteq \SET{1, \ldots, n}$
  with the following property:
  For every vertex set 
  $T \subseteq \Set{\SetNode{i}, \ComplNode{i}}{i \in S}$ 
  of size $\SetCard{T} > \SetCard{S}$, the induced subgraph
  $G_n[\CandNodes{n} \cup T]$ contains a directed cycle.
\end{conjecture}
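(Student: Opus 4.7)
The plan is a proof by contradiction exploiting topological orderings. Suppose some constraint-choice graph $G_n$ violates the conjecture, so that for every non-empty $S \subseteq \{1,\ldots,n\}$ there is a witness $T_S \subseteq \Set{\SetNode{i},\ComplNode{i}}{i \in S}$ with $\SetCard{T_S} > \SetCard{S}$ such that $G_n[\CandNodes{n} \cup T_S]$ is acyclic and thus admits a topological order $\lhd$. My first step would be to extract clean structural constraints. Writing $\EWitS{j} := \Set{i}{(\SetNode{j},\CandNode{i}) \in G_n}$, Definition~\ref{def:constraint-choice-graph} gives $j \in \EWitS{j}$ and $j+1 \notin \EWitS{j}$. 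Composing the solid edges $\CandNode{j+1} \to \SetNode{j} \to \CandNode{j}$ with the chosen dashed edges out of $\SetNode{j}$, any topological order must rank $\CandNode{j+1} \lhd \CandNode{i}$ for every $i \in \EWitS{j}$ whenever $\SetNode{j} \in T$; symmetrically, whenever $\ComplNode{j} \in T$, it must rank $\CandNode{i} \lhd \CandNode{j}$ for every $i \in \Compl{\EWitS{j}}$.

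Classifying the indices $j \in S$ by $T_S \cap \{\SetNode{j}, \ComplNode{j}\}$ into ``doubled'' ($D$), ``singly covered'' ($S_a \sqcup S_b$), and ``empty'' ($E$), the size inequality rewrites as $\SetCard{D} - \SetCard{E} = \SetCard{T_S} - \SetCard{S} \geq 1$, so strictly more indices are doubled than empty. For each doubled $j \in D$, both constraint families hold simultaneously, so every other $\CandNode{k}$ lies either topologically above $\CandNode{j+1}$ (if $k \in \EWitS{j}$) or below $\CandNode{j}$ (if $k \in \Compl{\EWitS{j}}$)---an extremely rigid placement of the pair $(\CandNode{j},\CandNode{j+1})$. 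I would then pick $S$ canonically from $G_n$ itself---for example, the indices where $\SetCard{\EWitS{j}}$ is extremal, or via an inductive ``peeling'' that removes one well-chosen candidate and reduces to $n-1$---so that for every possible witness $T_S$, the combined doubled rigidities must interlock two pinned pairs into a directed cycle, contradicting acyclicity.

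The hard part, already apparent from a direct $n = 3$ case analysis, is that the ``good'' $S$ is highly sensitive to the orientations of the dashed edges: $S = \{1,\ldots,n\}$ is too permissive (the adversary can scatter the empties to defeat every covering chain), while any singleton $S$ produces only a dangling path through $\SetNode{i}$ or $\ComplNode{i}$ with no return route to $\CandNode{i+1}$. As a complementary route, I would pursue in parallel a reduction to the equivalent Conjecture~\ref{conj:acyclic} through Lemma~\ref{lem:hall-application}: given a candidate counter-example $G_n$, reverse-engineer a vote profile whose comparison graph $\COMPG$ is Hamiltonian, thereby transferring the obstruction to the permutation-distribution formulation of Conjecture~\ref{conj:permutation-distribution}, which via Proposition~\ref{lem:first-two-conjectures} is already equivalent and may be more amenable to probabilistic techniques than the purely graph-theoretic form.
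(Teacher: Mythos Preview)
The statement you are attempting to prove is Conjecture~\ref{conj:constraint-choice-graph}, which the paper explicitly leaves open: the paper does not supply a proof, only a computer verification for $n \leq 7$ and the implication (Proposition~\ref{prop:choice-graph}) that it would suffice for Conjecture~\ref{conj:permutation-distribution}. So there is no ``paper's own proof'' to compare against, and your proposal must be judged on whether it actually closes the problem.

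It does not. Your first two paragraphs correctly unpack the definitions---the sets $\EWitS{j}$, the topological constraints forced by $\SetNode{j}$ or $\ComplNode{j}$ being in $T$, and the pigeonhole count $\SetCard{D} > \SetCard{E}$---but none of this is new information beyond restating the conjecture. The entire content of the conjecture is precisely the existence of the ``good'' $S$, and your proposal for finding it (``pick $S$ canonically \ldots\ for example, the indices where $\SetCard{\EWitS{j}}$ is extremal, or via an inductive peeling'') is a list of heuristics, not an argument; you yourself concede in the next paragraph that the correct $S$ is ``highly sensitive'' to the dashed-edge orientations and that neither the full set nor singletons work. That is exactly the obstacle the paper identifies as open. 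Your ``complementary route'' is also not a route to a proof: reverse-engineering a vote profile from a putative counterexample $G_n$ would at best establish that Conjecture~\ref{conj:constraint-choice-graph} is \emph{equivalent} to Conjectures~\ref{conj:acyclic} and~\ref{conj:permutation-distribution} (strengthening the one-way implication of Proposition~\ref{prop:choice-graph}), but since those are themselves open and equivalent to each other by Proposition~\ref{lem:first-two-conjectures}, this transfers the difficulty rather than resolving it.
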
  

\begin{remark}
  Notice that the conjecture indeed talks about the subgraph induced
  by \emph{all} nodes \CandNode{i} (not just those with indices in $T$),
  in addition to at least $\SetCard{S} + 1$ nodes from among the
  \SetNode{i}, \ComplNode{i} with $i \in S$.
\end{remark}

\begin{proposition} \label{prop:choice-graph}
  If Conjecture~\ref{conj:constraint-choice-graph} is true,
  then Conjecture~\ref{conj:permutation-distribution} is true.
\end{proposition}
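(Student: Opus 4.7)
The plan is to prove the contrapositive: suppose Conjecture~\ref{conj:permutation-distribution} fails, so there exist sets $\EWitS{1}, \ldots, \EWitS{n}$ (with $i \in \EWitS{i}$) and a distribution \PREFDIST such that $\Expect[\PREF \sim \PREFDIST]{\IndPos{i}{\PREF} + \IndNeg{i}{\PREF}} > 1$ for every $i$. I first observe that this forces $i+1 \notin \EWitS{i}$ for every $i$; otherwise the event $\Pref{i+1}{\EWitS{i}}$ is impossible, so $\IndPos{i}{\PREF} \equiv 0$ and the expectation is at most $1$.

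I then encode the sets $\EWitS{j}$ as a constraint-choice graph $G_n$: keep the fixed edges of Definition~\ref{def:constraint-choice-graph}, and for each $j \neq i, i-1$ insert $(\SetNode{j}, \CandNode{i})$ if $i \in \EWitS{j}$, and otherwise insert $(\CandNode{i}, \ComplNode{j})$. The resulting graph globally obeys the clean rules $\SetNode{j} \to \CandNode{i} \iff i \in \EWitS{j}$ and $\CandNode{i} \to \ComplNode{j} \iff (i \notin \EWitS{j}$ and $j \neq i)$; these rules agree with \emph{all} the prescribed fixed edges precisely because of $i \in \EWitS{i}$ and the derived $i+1 \notin \EWitS{i}$.

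Apply Conjecture~\ref{conj:constraint-choice-graph} to $G_n$ to obtain a non-empty $S$ with the stated cycle property. Summing the assumed strict inequalities over $i \in S$ gives $\sum_{i \in S} \Expect[\PREF \sim \PREFDIST]{\IndPos{i}{\PREF} + \IndNeg{i}{\PREF}} > \SetCard{S}$, so by averaging some ranking $\PREF$ in the support of \PREFDIST satisfies $\sum_{i \in S}(\IndPos{i}{\PREF} + \IndNeg{i}{\PREF}) \geq \SetCard{S} + 1$. Let $T = \Set{\SetNode{i}}{i \in S, \IndPos{i}{\PREF} = 1} \cup \Set{\ComplNode{i}}{i \in S, \IndNeg{i}{\PREF} = 1}$; then $T \subseteq \Set{\SetNode{i}, \ComplNode{i}}{i \in S}$ and $\SetCard{T} \geq \SetCard{S} + 1$, so the conjecture supplies a directed cycle in $G_n[\CandNodes{n} \cup T]$.

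The crux is to decode this cycle as a preference cycle under \PREF. Since every edge of $G_n$ joins $\CandNodes{n}$ to $\SetNodes{n} \cup \ComplNodes{n}$, the cycle is bipartite, of the form $\CandNode{i_1} \to u_1 \to \CandNode{i_2} \to \cdots \to \CandNode{i_k} \to u_k \to \CandNode{i_1}$ with each $u_\ell \in T$. Each two-edge segment $\CandNode{i_\ell} \to u_\ell \to \CandNode{i_{\ell+1}}$ is of one of two shapes: (i) $u_\ell = \SetNode{i_\ell - 1}$ (the only \SetNode reached from $\CandNode{i_\ell}$) with $i_{\ell+1} \in \EWitS{i_\ell - 1}$, whence $\SetNode{i_\ell - 1} \in T$ forces $\IndPos{i_\ell - 1}{\PREF} = 1$ and hence $\Pref{i_\ell}{i_{\ell+1}}$; or (ii) $u_\ell = \ComplNode{i_{\ell+1}}$ (the only \ComplNode reaching $\CandNode{i_{\ell+1}}$) with $i_\ell \notin \EWitS{i_{\ell+1}}$, whence $\ComplNode{i_{\ell+1}} \in T$ forces $\IndNeg{i_{\ell+1}}{\PREF} = 1$ and again $\Pref{i_\ell}{i_{\ell+1}}$. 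Strictness of each preference is automatic (using $i_\ell \notin \EWitS{i_\ell - 1}$ in case (i) and $i_{\ell+1} \in \EWitS{i_{\ell+1}}$ in case (ii) to rule out $i_\ell = i_{\ell+1}$), so chaining around the cycle yields a strict preference cycle under \PREF, the desired contradiction. I expect the main difficulty to lie in this last step --- correctly reading off which of the two shapes each intermediate node realizes, relating its index to the endpoints, and verifying that its membership in $T$ triggers precisely the event needed to force the segment's preference.
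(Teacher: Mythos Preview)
Your proof is correct and follows essentially the same route as the paper's: build the constraint-choice graph from the sets $\EWitS{j}$, use averaging over $S$ to extract a single ranking and a node set $T$ of size $>\SetCard{S}$, and then decode any directed cycle in $G_n[\CandNodes{n}\cup T]$ as a preference cycle by analyzing the two possible shapes of each intermediate node. The paper packages the logic as ``for every $S$ exhibit an acyclic $T_S$, hence $\neg$Conjecture~\ref{conj:constraint-choice-graph}'' whereas you package it as ``assume Conjecture~\ref{conj:constraint-choice-graph}, obtain $S$, build $T$, and reach a contradiction,'' but these are the same argument. One small nicety in your write-up that the paper leaves implicit: you first deduce $i{+}1\notin\EWitS{i}$ from the failure of Conjecture~\ref{conj:permutation-distribution}, which lets you state the edge rules $\SetNode{j}\to\CandNode{i}\iff i\in\EWitS{j}$ and $\CandNode{i}\to\ComplNode{j}\iff i\notin\EWitS{j},\,j\neq i$ globally (in particular subsuming the fixed edges) and cleanly rule out $i_\ell=i_{\ell+1}$ in the cycle decoding.
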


\begin{proof}
We prove the contrapositive, and assume that 
Conjecture~\ref{conj:permutation-distribution} is false; that is,
Inequality~\eqref{eqn:permutation-violation} holds for all $i$.

Given the (assumed) sets
$\EWitS{1}, \EWitS{2}, \ldots, \EWitS{n} \subseteq \SET{1, \ldots, n}$,
we define the following constraint-choice graph $G_n$.
It contains the $3n$ nodes
$\CandNodes{n} \cup \SetNodes{n} \cup \ComplNodes{n}$,
and all the edges that are prescribed by
Definition~\ref{def:constraint-choice-graph};
in addition, if $i \in \EWitS{j}$, then $G_n$ contains the edge
$(\SetNode{j}, \CandNode{i})$;
otherwise, it contains the edge $(\CandNode{i}, \ComplNode{j})$.
This completes the definition of $G_n$.
An example for specific sets \EWitS{j} is shown in
Figure~\ref{fig:constraint-choice-graph}.
To gain intuition for the following proof, the reader is encouraged to
think of \CandNode{i} as corresponding to candidate $x_i$,
of \SetNode{i} as corresponding to \EWitS{i},
and of \ComplNode{i} as corresponding to \Compl{\EWitS{i}}.

To prove that $G_n$ violates
Conjecture~\ref{conj:constraint-choice-graph},
consider an arbitrary non-empty set of indices
$S \subseteq \SET{1, \ldots, n}$.
By linearity of expectations,
\eqref{eqn:permutation-violation} implies that 
$\Expect[\PREF \sim \PREFDIST]{\sum_{i \in S} (\IndPos{i}{\PREF} + \IndNeg{i}{\PREF})}
> \SetCard{S}$. 
Because the maximum is at least the average,
this implies that there exists some ranking \PREF[S] with 
$\sum_{i \in S} (\IndPos{i}{\PREF[S]} + \IndNeg{i}{\PREF[S]}) > \SetCard{S}$,
and because the quantity on the left-hand side is integral,
we can strengthen this inequality to
\begin{align}
  \sum_{i \in S} (\IndPos{i}{\PREF[S]} + \IndNeg{i}{\PREF[S]})
  & \geq \SetCard{S} + 1.
    \label{eqn:permutation-violation-set}
\end{align}

Define the node set
$T_S := \Set{\SetNode{i}}{\IndPos{i}{\PREF[S]} = 1} \cup
     \Set{\ComplNode{i}}{\IndNeg{i}{\PREF[S]} = 1}$.
By Inequality~\eqref{eqn:permutation-violation-set},
$T_S$ contains at least $\SetCard{S} + 1$ nodes from
$\Set{\SetNode{i}, \ComplNode{i}}{i \in S}$.
We will show that the induced subgraph $G_n[T_S \cup \CandNodes{n}]$ is acyclic;
since we show this for every $S$, it proves that $G_n$ violates
Conjecture~\ref{conj:constraint-choice-graph}.

To show that $G_n[T_S \cup \CandNodes{n}]$ is acyclic,
we use a proof by contradiction,
and assume that $G_n[T_S \cup \CandNodes{n}]$ contains a cycle $C$.
Because edges only go between nodes \CandNode{i} and either
\SetNode{j} or \ComplNode{j}, this cycle must alternate nodes
\CandNode{i} with nodes \SetNode{j} or \ComplNode{j}.
Let $\SetCard{C} = 2k$, and let $i_1, i_2, \ldots, i_k$ be such that
the order of nodes \CandNode{i} in $C$ is
$\CandNode{i_1}, \CandNode{i_2}, \ldots, \CandNode{i_k}, \CandNode{i_1}$.
Fix some $\ell \in \SET{1, \ldots, k}$.
Between \CandNode{i_\ell} and \CandNode{i_{\ell+1}}
(here, $k+1 := 1$), the cycle must visit either some node
$\SetNode{j} \in T_S$ or some node $\ComplNode{j} \in T_S$.
We distinguish two cases:

\begin{itemize}
\item If the intermediate node is \SetNode{j},
  observe first that by Definition~\ref{def:constraint-choice-graph},
  the only incoming edge to \SetNode{j} is
  $(\CandNode{j+1}, \SetNode{j})$, so $i_\ell = j+1$.
  The specific constraint-choice graph $G_n$ defined in this proof
  includes outgoing edges from \SetNode{j} to exactly
  the \CandNode{i} with $i \in \EWitS{j}$;
  notice that this includes the case of the edge
  $(\SetNode{j}, \CandNode{j})$, because $j \in \EWitS{j}$.
  In particular, it applies to the edge
  $(\SetNode{j}, \CandNode{i_{\ell+1}})$,
  implying that $i_{\ell+1} \in \EWitS{j}$.

  Because $\SetNode{j} \in T_s$, we have that
  $\IndPos{i}{\PREF[S]} = 1$, meaning that under \PREF[S],
  candidate $i_{\ell} = j+1$ precedes all candidates in \EWitS{j};
  this includes, in particular, the candidate $i_{\ell+1}$.
  In summary, we have inferred that
  \Pref[S]{i_{\ell}}{i_{\ell+1}}.
\item If the intermediate node is \ComplNode{j},
  observe that by Definition~\ref{def:constraint-choice-graph},
  the only outgoing edge from \ComplNode{j} is
  $(\ComplNode{j}, \CandNode{j})$, implying that $i_{\ell+1} = j$.
  For the particular graph $G_n$ defined in this proof, the incoming edge
  $(\CandNode{i_{\ell}}, \ComplNode{j})$ exists exactly when
  $i_{\ell} \notin \EWitS{j}$.

  Because $\ComplNode{j} \in T_s$, we have that
  $\IndNeg{j}{\PREF[S]} = 1$, so under \PREF[S],
  candidate $j = i_{\ell+1}$ is ranked after all candidates in
  $\Compl{\EWitS{j}}$.
  By the argument of the preceding paragraph,
  the set \EWitS{j} includes, in particular, the candidate $i_{\ell}$.
  In summary, we have again inferred that
  \Pref[S]{i_{\ell}}{i_{\ell+1}}.
\end{itemize}

Thus, we have derived that \Pref[S]{i_{\ell}}{i_{\ell+1}}
for each $\ell = 1, \ldots, k$.
By transitivity, this results in a cycle in \PREF[S],
a contradiction to it being a ranking.
\end{proof}

\begin{remark}
Conjecture~\ref{conj:constraint-choice-graph} is sufficiently clean
and combinatorial that it can be verified by hand for $n \leq 4$.
An exhaustive computer search for $n \leq 7$ has verified
the conjecture for all such $n$, recovering and extending
computer-assisted results in Theorem~4.11 in
\cite{munagala:wang:improved} (although we did not include some ranges
for $n > 7$ which \cite{munagala:wang:improved} handle with a
restricted number of voters).
Unfortunately, because it basically involves a search over all
possibilities of $n$ subsets \EWitS{i} of $n$ elements
(represented as graph edge choices), the running time scales roughly
as $2^{n^2}$; from $n=6$ to $n=7$, the running time increases from
less than a minute to roughly a day.
Hence, even $n=8$ is likely out of reach.
But the computer search is encouraging in terms of trying to prove the
conjecture (rather than disproving it).
\end{remark}

\section{Conclusions} \label{sec:conclusions}
Our work raises a very obvious question: prove (or possibly disprove)
the conjectures stated in Section~\ref{sec:combinatorial}.
Based on exhaustive computer search, it seems more likely that the
conjectures are true, and the \ABM mechanism in fact is always able to
find a candidate with distortion at most 3.

Going beyond these conjectures, we believe that the duality-based
framework may be useful for bounding the performance of other voting
mechanisms, in particular, those that may miss information on parts of
voters' ranking.
For instance, such a situation can occur in the setting of
\cite{DistortionCommunication}, where voters can only name the
candidates in a subset of positions on their ballot,
rather than giving a complete ranking.
The analysis of a mechanism proposed in \cite{DistortionCommunication}
becomes much simpler (and tighter) using the techniques developed here.

While we have only studied deterministic mechanisms here,
the framework can also be extended to randomized mechanisms.
When the mechanism selects a candidate $x$ with probability
$q_x$, an upper bound can be obtained by bounding a flow that
inserts $q_x$ units of flow at each of the nodes $(v,x)$,
which again have to be routed to \LPC.

It is conceivable that duality-based approaches similar to the one we
developed could be helpful for the analysis of mechanisms for other
problems in the cardinal/ordinal framework:
a worst-case metric for a given input can often be characterized in
terms of a linear program,
and the dual may in general lead to a framework for proving upper
bounds on the performance of a chosen mechanism.

\subsubsection*{Acknowledgements}
The author would like to thank
Elliot Anshelevich, Sid Banerjee, Shaddin Dughmi, Bobby Kleinberg and Kai Wang
for useful conversations and pointers, and anonymous reviewers for
useful feedback.

\bibliographystyle{plain}
\bibliography{names,conferences,publications,bibliography}

\end{document}